\documentclass[12pt, reqno]{amsart}
\usepackage{amsmath, amssymb}
\usepackage{graphicx, color, wrapfig, framed}
\usepackage[height=22cm, width=15.5cm, hmarginratio={1:1}]{geometry} 
\usepackage{amsthm, array}
\usepackage{graphicx,xcolor}

\usepackage[hypertexnames=false,hyperfootnotes=false,colorlinks=true,linkcolor=blue,%
citecolor=purple,filecolor=magenta,urlcolor=cyan,unicode,linktocpage=true,pagebackref=false]{hyperref}
\usepackage{nameref,zref-xr}

\numberwithin{equation}{section}

\newcommand{\beq}{\begin{equation}}
	\newcommand{\eeq}{\end{equation}}
\newcommand{\bal}{\begin{align}}
	\newcommand{\eal}{\end{align}}
	
\def\thin{\hskip 1 pt} 

\newcommand{\e}{\epsilon}
\newcommand{\p}{\partial}
\newcommand{\nn}{\nonumber}

\newcommand{\CC}{\mathbb{C}}
\newcommand{\QQ}{\mathbb{Q}}

\newcommand{\Mgnbar}{\overline{\mathcal{M}}_{g,n}}

\newcommand{\mbE}{\mathbb{E}}
\newcommand{\mbC}{\mathbb{C}}
\newcommand{\hcA}{\widehat{\mathcal{A}}}
\newcommand{\hcF}{\widehat{\mathcal{F}}}
\newcommand{\oh}{\overline{h}}
\newcommand{\cS}{\mathcal{S}}

\newcommand{\tr}{\widetilde{r}}
\newcommand{\tg}{\widetilde{g}}
\newcommand{\tc}{\widetilde{c}}
\newcommand{\ovr}{\overline{r}}
\newcommand{\mcO}{\mathcal{O}}
\newcommand{\cP}{\mathcal{P}}
\newcommand{\mbZ}{\mathbb{Z}}
\newcommand{\Mi}{\mathrm{Mi}}
\newcommand{\tH}{\widetilde{H}}
\newcommand{\tomega}{\widetilde{\omega}}
\newcommand{\triv}{\mathrm{triv}}
\newcommand{\const}{\mathrm{const}}
\newcommand{\tu}{\widetilde{u}}
\newcommand{\oM}{\overline{\mathcal{M}}}
\def\d{\partial}
\newcommand{\Coef}{\mathrm{Coef}}
\newcommand{\lb}{\left(}
\newcommand{\rb}{\right)}
\newcommand{\DR}{\mathrm{DR}}
\newcommand{\mbQ}{\mathbb{Q}}
\newcommand{\CP}{\mathbb{CP}}
\newcommand{\og}{\overline{g}}
\newcommand{\tsigma}{\widetilde{\sigma}}
\newcommand{\eps}{\epsilon}
\newcommand{\od}{\overline{d}}

\newcommand{\cA}{\mathcal{A}}

\newtheorem{thm}{Theorem}[section]

\newtheorem{lem}[thm]{Lemma}
\newtheorem{Def}[thm]{Definition}
\newtheorem{cor}[thm]{Corollary}
\newtheorem{rmk}[thm]{Remark}
\newtheorem{conj}[thm]{Conjecture}
\newtheorem{exa}[thm]{Example}

\newenvironment{manualtheorem}[1]{%
	\manualtheoreminner
}{\endmanualtheoreminner}

\title[Bihamiltonian tests for integrable systems associated to F-CohFTs]{Bihamiltonian tests for integrable systems associated to rank-$1$ F-CohFTs}
\author{Alexandr Buryak, Jianghao Xu, Di Yang}
\address{A. Buryak, Faculty of Mathematics, National Research University Higher School of Economics, Usacheva str. 6, Moscow, 119048, Russian Federation;
Skolkovo Institute of Science and Technology, Bolshoy Boulevard 30, bld. 1, Moscow, 121205, Russian Federation}
\email{aburyak@hse.ru}
\address{Jianghao Xu, School of Mathematical Sciences, University of Science and Technology of China, 230026 Hefei, P.R.~China}
\email{xjh\_020403@mail.ustc.edu.cn}
\address{Di Yang, School of Mathematical Sciences, University of Science and Technology of China, 230026 Hefei, P.R.~China}
\email{diyang@ustc.edu.cn}
\begin{document}

\begin{abstract}
Double ramification (DR) hierarchies associated to rank-$1$ F-CohFTs are important integrable perturbations of the Riemann--Hopf hierarchy. In this paper, we perform bihamiltonian tests for these DR hierarchies, and conjecture that the ones that are bihamiltonian form a $2$-parameter family. Remarkably, our computations suggest that there is a $1$-parameter subfamily of the rank-$1$ F-CohFTs, where the corresponding DR hierarchy is conjecturally Miura equivalent to the Camassa--Holm hierarchy. We also prove a conjecture regarding bihamiltonian Hodge hierarchies. Finally, we systematically study Miura invariants, and for another $1$-parameter subfamily propose a conjectural relation to the Degasperis--Procesi hierarchy.
\end{abstract}

\date{\today}

\maketitle

\tableofcontents

\section{Introduction}

The {\em Korteweg--de Vries (KdV) equation} 
\beq\label{KdVeq}
u_t = u u_x +\frac{\e^2}{12} u_{xxx}
\eeq
and the {\em Camassa--Holm (CH) equation}
\beq\label{CHequation}
v_t-\e^2 v_{xxt}= 
v v_x
-\e^2 \Bigl(\thin \frac23 v_x v_{xx}+\frac13 v v_{xxx}\Bigr) 
\eeq
are two important nonlinear partial differential equations (PDEs). 
One crucial feature of these two PDEs is that they have bihamiltonian structures~\cite{CH93, FF81, Magri}. 
The bihamiltonian structure of the KdV equation reads
\begin{equation*}
u_t = \cP_1 \Bigl(\frac{\delta H_1}{\delta u}\Bigr) = \frac23 \cP_2 \Bigl(\frac{\delta H_0}{\delta u}\Bigr),
\end{equation*}
where 
\beq\label{KdVbihamiOp}
\cP_1 = \p_x,\quad \cP_2 = u \p_x +\frac12 u_x + \frac{\e^2}8 \p_x^3
\eeq
are compatible Poisson operators and 
$H_1 = \int\bigl(\frac{1}{6}u^3-\frac{\e^2}{24}u_x^2\bigr)dx$, 
$H_0 =\int\frac{1}{2} u^2dx$
are Hamiltonians. The bihamiltonian structure of the CH equation reads
\begin{equation*}
m_t = \cP_1 \Bigl(\frac{\delta H_1}{\delta m}\Bigr) = \frac23 \cP_2 \Bigl(\frac{\delta H_0}{\delta m}\Bigr),
\end{equation*}
where $m:=v-\e^2 v_{xx}$,   
\beq\label{CHbihamiOp}
\cP_1 = \p_x-\e^2 \p_x^3,\quad \cP_2 = m \p_x +\frac12 m_x
\eeq
are compatible Poisson operators and 
$H_1 = \int(\frac{1}{6}v^3-\frac{\e^2}{12}v^2 v_{xx})dx$, 
$H_0 =\int(\frac{1}{2} v^2 -\frac{\e^2}2 v v_{xx})dx$
are Hamiltonians.
Here and below $\frac{\delta}{\delta u}$ denotes the variational derivative,~i.e.,
\beq\label{VariationalDerivative}
\frac{\delta(\int f dx)}{\delta u}:=\sum_{k=0}^{\infty} (-\p_x)^k \Bigl(\frac{\p f}{\p u_{k}}\Bigr), \quad u_{k}:=\p^{k}_x(u).
\eeq
Moreover, each of \eqref{KdVeq}, \eqref{CHequation}
can \cite{CH93, Dickey, DYZ, FF81, GGKM, Magri} be extended to a hierarchy of pairwise commuting PDEs, called the {\em KdV hierarchy} and the {\em CH hierarchy}, respectively.

\medskip

The last 30 years have witnessed the discovery of deep relations between integrable hierarchies and the topology of $\Mgnbar$ --- the Deligne--Mumford moduli spaces of stable algebraic curves. These deep relations were first manifested in Witten's conjecture~\cite{Witten}, which was proved by Kontsevich \cite{Kont}. This result established a remarkable relationship between a particular $\tau$-function of the KdV hierarchy and the partition function of the intersection numbers of $\psi$-classes on $\Mgnbar$. In \cite{BPS, DubrovinGW13, DZnormalform}, for any given semisimple cohomological field theory (CohFT), the {\em Dubrovin--Zhang (DZ) integrable hierarchy} was defined, whose topological partition function is the total descendent potential of the CohFT. 
In~\cite{Buryak15}, another integrable hierarchy associated to a given CohFT
was constructed using the double ramification (DR) cycles on the moduli spaces of curves. This hierarchy is called the {\em DR hierarchy}. Both the DZ hierarchy and the DR hierarchy are now known to be Miura equivalent \cite{BDGR18,BDGR20,BGR19,BLS24}.

\medskip

Both the KdV hierarchy and the CH hierarchy have the same dispersionless part, which is the {\em Riemann--Hopf (RH) hierarchy}
\begin{equation}\label{eq:RH hierarchy}
u_{t_d}=\frac{u^d}{d!}u_x,\quad d\ge 1.
\end{equation}
In \cite{DLYZ16}, a class of integrable hierarchies, called {\em $\tau$-symmetric Hamiltonian deformations of the Riemann--Hopf hierarchy}, was studied. It was conjectured in~\cite{DLYZ16} that the {\em Hodge integrable hierarchy} \cite{BPS, DLYZ16} (for short the {\it Hodge hierarchy}), which is the DZ hierarchy associated to a rank-$1$ CohFT $\{c_{g,n}\}$ and contains an infinite family of parameters
given by an $R$-matrix $R(z)=\exp(\sum_{i\ge 1}r_{2i-1}z^{2i-1})$, $r_{2i-1}\in\mbC$, is a universal object for all $\tau$-symmetric Hamiltonian deformations of the Riemann hierarchy. This is the content of the {\em Hodge Universality Conjecture}, which is still an open problem; see also~\cite{BDGR20, BR25}.
Another important question is about when the Hodge hierarchy has a bihamiltonian structure. In~\cite{DLYZ16}, the bihamiltonian test for the Hodge hierarchy was made and it was conjectured that the Hodge hierarchy 
associated to a rank-$1$ CohFT $\{c_{g,n}\}$ has a DZ-type bihamiltonian structure if and only if 
$c_{g,n}$ is given by 
\begin{equation}\label{eq:CohFT for Volterra}
c_{g,n}^{\rm special}=\Lambda(2q)^2\Lambda(-q)\in H^*(\Mgnbar),\quad q\in\mbC,
\end{equation}
where $\Lambda(z):=1+z\lambda_1+\ldots+z^g\lambda_g$, $\lambda_i:=c_i(\mbE)\in H^{2i}(\Mgnbar)$, and $\mbE$ is the Hodge bundle on $\Mgnbar$.  
The particular CohFT $c_{g,n}^{\rm special}$ corresponds to the specialization
\beq\label{specializationofHodge}
r_{2i-1}=r_{2i-1}^{\rm special}=\frac{B_{2i}}{2i(2i-1)}(2^{2i}-1) q^{2i-1},\quad q\in\mbC,
\eeq
where $B_{k}$ denotes the $k$th Bernoulli number. The fact that the Hodge hierarchy under the specialization~\eqref{specializationofHodge} is bihamiltonian was proved~\cite{DLYZ20} 
as a consequence of the Hodge--GUE correspondence~\cite{DLYZ20, DY17}.
In this paper, we will prove the remaining part of the conjecture by using an argument from~\cite{YZ2023} and a result from \cite{BR25}.

\medskip

In~\cite{ALM} Arsie--Lorenzoni--Moro (ALM) studied integrable systems of conservation laws that are deformations of the RH hierarchy. ALM showed that there exists a unique Miura-type transformation that transforms such a deformation to a certain {\em normal form}. Moreover, ALM conjectured~\cite{ALM} that any such normal form is uniquely determined by the coefficients of the quasi-linear part of the first flow and that these coefficients can be arbitrary functions of one variable. Following the terminology from~\cite{AL18}, we call these coefficients {\em Miura invariants}. The conjecture from~\cite{ALM} will be called the {\em Arsie--Lorenzoni--Moro conjecture}.

\medskip

Recently, the DR hierarchy associated to an F-CohFT (which is a generalization of the notion of a CohFT) was constructed in~\cite{ABLR21}. Arsie--Buryak--Lorenzoni--Rossi also constructed~\cite{ABLR21} a family of F-CohFTs, which in the rank-$1$ case are parameterized by an $R$-matrix $R(z)=\exp(\sum_{i\ge 1}r_{i}z^i)$, $r_i\in\mbC$. The DR hierarchies associated to these rank-$1$ F-CohFTs were studied in details in~\cite{BR25}. Buryak--Rossi proved~\cite{BR25} that these DR hierarchies are in the normal form, have constant Miura invariants, which can be arbitrary. Also, the DR hierarchies associated to rank-$1$ F-CohFTs with different $R$-matrices have different Miura invariants. In this paper, we will do the bihamiltonian tests for the DR hierarchies under consideration. Our computations give enough evidence to conjecture that the DR hierarchies (associated to the rank-$1$ F-CohFTs with $R$-matrices $R(z)=\exp(\sum_{i\ge 1}r_{i}z^i)$) that are bihamiltonian form a $2$-parameter family. A $1$-parameter subfamily in this family corresponds to the CohFT~\eqref{eq:CohFT for Volterra}. Remarkably, our computations suggest that there is another $1$-parameter subfamily, where the corresponding DR hierarchy is Miura equivalent to the CH hierarchy. As far as we know, this is the first time that a connection of the CH hierarchy to the topology of~$\Mgnbar$ is found. 

\medskip

Note that the DZ hierarchy corresponding to an arbitrary F-CohFT can be constructed and both the DZ hierarchy and the DR hierarchy are Miura equivalent, see~\cite{BS24,BLS24}. 

\medskip

There is a simple procedure that allows, starting from the DR hierarchy corresponding to a rank-$1$ F-CohFT, to get a hierarchy of conservation laws with not necessarily constant Miura invariants. Namely, one can take a linear combination of the flows of the DR hierarchy and then make a Miura-type transformation after which this flow becomes $u_s=\p_x(\frac{u^2}{2}+O(\e))$. The Miura invariants of this flow are not necessarily constants. We will find an explicit formula for these Miura invariants. We will also show that one can obtain in this way integrable hierarchies of conservation laws with arbitrary Miura invariants that depend linearly on~$u$. This implies that if we assume that the ALM conjecture is true, then there exists an $R$-matrix such that the corresponding DR hierarchy is Miura equivalent to the CH hierarchy, which agrees with our conjecture from above. Remarkably, we also see that the ALM conjecture implies that there exists another $R$-matrix such that the associated DR hierarchy is Miura equivalent to the integrable hierarchy corresponding to the {\em Degasperis--Procesi (DP) equation}~\cite{DP}
$$
v_t-\e^2 v_{xxt}= 
v v_x
-\e^2 \Bigl(\,\frac{3}{4} v_x v_{xx}+\frac{1}{4}v v_{xxx}\Bigr),
$$
which, as far as we know, never appeared among the integrable systems related to the moduli spaces of curves $\Mgnbar$.  

\medskip

The paper is organized as follows. In Section~\ref{section:preliminaries about PDEs}, we review some terminologies and results about evolutionary PDEs of DZ type.  In Section~\ref{review}, we study integrable systems associated to rank-$1$ F-CohFTs. In Section~\ref{bihamiHodge}, we prove the conjecture from~\cite{DLYZ16} on bihamiltonian Hodge hierarchies. In Section~\ref{bihamitest}, we perform bihamiltonian tests for the DR hierarchies for rank-$1$ F-CohFTs. In Section~\ref{Miura-invariants-transf}, we systematically study the Miura invariants of the DR hierarchies associated to rank-$1$ F-CohFTs.

\medskip

\noindent{\bf Notation and conventions.}
\begin{itemize}
\item For a topological space $X$, we denote by $H^i(X)$ the cohomology groups with the coefficients in $\mbC$. Let $H^{\mathrm{e}}(X):=\bigoplus_{i\ge 0}H^{2i}(X)$. 

\smallskip

\item We will work with the moduli spaces $\Mgnbar$ of stable algebraic curves of genus $g$ with $n$ marked points, which are nonempty only when the condition $2g-2+n>0$ is satisfied. We will often omit mentioning this condition explicitly, and silently assume that it is satisfied when a moduli space is considered.

\smallskip

\item We denote by $\cP_n$ the set of partitions of $n\in\mbZ_{\ge 0}$. For $\lambda=(\lambda_1,\ldots,\lambda_l)\in\cP_n$, let $m_k(\lambda):=|\{1\le i\le l|\lambda_i=k\}|$.
\end{itemize}

\medskip

\noindent{\bf Acknowledgements.} The work of A.~B. is an output of a research project implemented as part of the Basic Research Program at the National Research University Higher School of Economics (HSE University). The work of J. Xu and D. Yang is supported by NSFC No.~12371254 and CAS YSBR-032.

\medskip

\section{Preliminaries on evolutionary PDEs of DZ type}\label{section:preliminaries about PDEs}

In this section, we review the necessary terminology and results about evolutionary PDEs of DZ type, 
referring to \cite{DLZ06, DZnormalform} (cf.~also~\cite{YZ2023}) for more details.

\medskip

\subsection{Evolutionary PDEs of DZ type and Miura-type transformations}

In this paper, we consider scalar ($1$+$1$)-dimensional evolutionary PDEs. Here ``scalar ($1$+$1$)-dimensional'' means that the number of unknown functions is one, and the unknown function has one space variable and one time variable. And by saying ``evolutionary" we mean that the PDE expresses the partial derivative with respect to the time variable as a function of the partial derivatives with respect to the space variable.

\medskip

Let $\mathcal{A}_u:=\mathcal{S}(u)[u_1,u_2,\ldots]$ be the differential polynomial ring of $u$, where $\mathcal{S}(u)$ is a suitable ring of functions of $u$. For instance, $\mathcal{S}(u)$ could be $\mathcal{O}(B)$, the ring of analytic functions on an open disk $B\subset\mbC$. When $u$ is taken as a function of $x$, we can identify~$u_m$ with $\p^m u/\p x^m$. 
Introduce an operator $\p_x$ in $\mathcal{A}_u$ by $\p_x:=\sum_{m\geq 0}u_{m+1}\p/\p u_{m}$. Define a gradation $\deg$ on $\mathcal{A}_u$ by assigning $\deg u_m:=m$ $(m\geq 1)$, and we use $\mathcal{A}_u^{[k]}$ to denote the set of elements in $\mathcal{A}_u$ that are graded homogeneous of degree $k$ with respect to $\deg$. Let us extend the gradation $\deg$ on $\mathcal{A}_u$ to a gradation on $\hcA_u:=\mathcal{A}_u[[\e]]$ by assigning $\deg\e:=-1$. We denote by $\hcA_u^{[k]}\subset\hcA_u$ the homogeneous component of degree~$k$. To a differential polynomial $P\in\hcA_u$ we assign the derivation
$$
D_P:=\sum_{n\ge 0}(\d_x^n P)\frac{\d}{\d u_n}\colon\hcA_u\to\hcA_u
$$
called the \emph{evolutionary operator}.

\medskip

By an {\em evolutionary PDE of hydrodynamic type}, we mean a PDE of the form
\beq\label{HydrodynamicPDE}
\frac{\p u}{\p t}=S(u) \frac{\p u}{\p x},\quad S(u)\in\mathcal{S}(u).
\eeq
For example, the RH hierarchy~\eqref{eq:RH hierarchy} consists of infinitely many evolutionary PDEs of hydrodynamic type. By an {\em evolutionary PDE of DZ type} we mean
\beq\label{DZnormalform}
u_t=\sum_{k\geq 0}\e^k S_k(u,u_1,\ldots,u_{k+1}),\quad S_k\in \mathcal{A}_u^{[k+1]}.
\eeq
If $S_0(u,u_1)=S(u)u_1$, then we call \eqref{DZnormalform} a {\em perturbation} of~\eqref{HydrodynamicPDE}, 
and call~\eqref{HydrodynamicPDE} the {\em dispersionless limit} of~\eqref{DZnormalform}. Note that if $S(u)\ne\mathrm{const}$, then any PDE 
\begin{equation*}
u_s=\sum_{k\geq 0}\e^k Q_k(u,u_1,\ldots,u_{k+1}),\quad Q_k\in \mathcal{A}_u^{[k+1]},
\end{equation*}
that commutes with~\eqref{DZnormalform} is uniquely determined~\cite{LZ06} by the differential polynomials~$Q_0$ and $S_0,S_1$, $\dots$.

\medskip

Denote by $\int(\cdot) dx\colon \mathcal{A}_u \to \mathcal{A}_u /(\p_x \mathcal{A}_u\oplus\mbC)$ the canonical projection, which can be termwise extended to a projection $\hcA_u \to\hcF_u:=\hcA_u/(\p_x \hcA_u\oplus\mbC[[\e]])$. We call elements of $\hcF_u$ {\em local functionals}. The variational derivative $\frac{\delta}{\delta u}$ defined by \eqref{VariationalDerivative} is well defined on~$\hcF_u$ and gives a linear map $\frac{\delta}{\delta u}\colon\hcF_u\to\hcA_u$. It is known that a local functional $\oh\in\hcF_u$ is equal to zero if and only if $\frac{\delta\oh}{\delta u}=0$. A local functional $\oh\in\hcF_u$ is called a \emph{conserved quantity} of an evolutionary PDE $u_t=P$, $P\in\hcA_u^{[1]}$, if $D_P(\oh)=0$. 

\medskip

Let $\cP$ be a nonzero operator of the form
\beq\label{PoissonOperator}
\cP=\sum_{k\geq 0}\e^k \cP^{[k]},\quad \cP^{[k]}=\sum_{j=0}^{k+1}A_{k,j}\p_x^j,\quad A_{k,j}\in\mathcal{A}_u^{[k+1-j]}.
\eeq
The bracket $\{\cdot,\cdot\}_{\cP}\colon\hcF_u\times\hcF_u\to\hcF_u$ associated to the operator $\cP$ is defined by
$$
\biggl\{\int f dx,\int g dx\biggr\}_{\cP} := \int \frac{\delta f}{\delta u} \cP\Bigl( \frac{\delta g}{\delta u} \Bigr) dx.
$$
By definition the bracket is bilinear. If $\{\,,\,\}_{\cP}$ is anti-symmetric and satisfies the Jacobi identity, then we call it a {\it Poisson bracket} 
and the operator~$\cP$ a {\it Poisson operator}. The part $\cP^{[0]}$ in \eqref{PoissonOperator} is called the dispersionless limit of~$\cP$. If $\cP$ is a Poisson operator, then~$\cP^{[0]}$ is also a Poisson operator. We call a nonzero Poisson operator like $\cP^{[0]}$ a {\em Poisson operator of hydrodynamic type}. In \cite{DN83}, a Poisson operator of hydrodynamic type associated to a contravariant flat pseudo-Riemannian metric was constructed. In our case with one dependent variable such $\cP^{[0]}$ simply has the form
$$
\cP^{[0]}=g(u)\p_x+\frac12 g'(u)u_1,
$$
where $g(u)\in\cS(u)$ is an arbitrary non-zero function, which is the contravariant metric associated to~$\cP^{[0]}$.

\medskip

A {\em Hamiltonian PDE of DZ type} is a PDE of the form~\eqref{DZnormalform} such that there exists a Poisson operator~$\cP$ and a local functional $\oh\in\hcF_u^{[0]}$ satisfying
\beq\label{hamiltonianPDE}
\frac{\p u}{\p t}=\cP\Bigl(\frac{\delta \oh}{\delta u}\Bigr).
\eeq
The local functional $\oh$ is called the {\em Hamiltonian} of \eqref{hamiltonianPDE}, and a differential polynomial $h\in\hcA_u^{[0]}$ satisfying $\oh=\int h dx$ is called the {\em Hamiltonian density}. A local functional $\ovr=\int r dx$, $r\in\hcA_u^{[0]}$, is called a {\em Casimir} for the Poisson operator~$\cP$~if
$$
\cP \Bigl(\frac{\delta\ovr}{\delta u}\Bigr)=0.
$$
An evolutionary PDE of DZ type is called {\em bihamiltonian} if it can be written as a Hamiltonian PDE in two ways,
\begin{gather}\label{eq:bihamiltonian PDE}
\frac{\p u}{\p t}=\cP_1\Bigl(\frac{\delta\oh_1}{\delta u}\Bigr)=\cP_2\Bigl(\frac{\delta\oh_2}{\delta u}\Bigr),
\end{gather}
where the Poisson operators $\cP_1, \cP_2$ are {\it compatible}, i.e., an arbitrary linear combination of $\cP_1,\cP_2$ is again a Poisson operator, 
and $\cP_1\ne 0$ and $\cP_2\ne\lambda \cP_1$ for any $\lambda\in\mbC$.

\medskip

We will call a transformation of the form
\beq\label{Miuratypetrans}
u\mapsto \tilde{u}=\sum_{k\geq 0}\e^k M^{[k]}(u,u_1,\ldots,u_{k})\in\hcA^{[0]}_u,\quad \frac{\p M^{[0]}(u)}{\p u}\ne 0,
\eeq
a {\em Miura-type transformation}. The class of evolutionary PDEs of DZ type, the class of Hamiltonian PDEs of DZ type, and the class of Poisson operators are invariant under the Miura-type transformations (note however that the disk $B$, giving the ring~$\cS(u)=\mcO(B)$, may change under these transformations). In particular, after the Miura-type transformation~\eqref{Miuratypetrans}, a Poisson operator $\cP$ is transformed to 
$$
\widetilde{\cP}=\sum_{k,\ell\ge 0}(-1)^{\ell}\frac{\p \tu}{\p u_{k}}\circ \p_x^k \circ \cP 
\circ \p_x^{\ell}\circ \frac{\p \tu}{\p u_{\ell}}.
$$
We will say that the Miura-type transformation~\eqref{Miuratypetrans} is {\em close to identity} if $M^{[0]}(u)=u$.

\medskip

\subsection{Central invariant of a pair of compatible Poisson operators}

When Poisson operators $\cP_1,\cP_2$ are compatible, we call $\cP_2+\lambda \cP_1$ the {\it Poisson pencil} associated to~$\cP_1, \cP_2$. Consider the dispersionless limits of the Poisson operators $\cP_1,\cP_2$:
$$
\cP_i^{[0]}=g_i(u)\p_x+\frac12 g_i'(u)u_1,\quad i=1,2.
$$
Suppose that $g_1(u)\ne 0$ and $g_2(u)\ne\lambda g_1(u)$ for any $\lambda\in\mbC$. The associated {\em canonical coordinate} $r=r(u)$ is defined by $r(u):=g_2(u)/g_1(u)$. In the $r$-coordinate, the pencil of metrics associated to $\cP_2+\lambda \cP_1$ reads $r g(r)+\lambda g(r)$, with the metric $g(r):=(g_1(u)r'(u)^2)|_{u=u(r)}$. The equivalence class of Poisson pencils $\cP_2+\lambda \cP_1$ with the fixed dispersionless limit $\cP_2^{[0]}+\lambda \cP_1^{[0]}$ up to Miura-type transformations that are close to identity is characterized by the {\it central invariant} \cite{DLZ06, LZ05}
$$
c(r):=\frac{r'(u)^2}{3 g(r)^2} ( A_{2,3;2}-r A_{2,3;1})|_{u\mapsto u(r)}.
$$
Here $A_{2,3;i}$ is the coefficient in \eqref{PoissonOperator} for the operator $\cP_i$, $i=1,2$.

\medskip

Note that, by~\cite[Theorem~1.8]{DLZ06}, the pair of Poisson operators~$(\cP_1,\cP_2)$ is quasi-trivial, meaning that it can be reduced to the dispersionless part by a quasi-Miura transformation, and moreover this quasi-Miura transformation has the form
\begin{gather}\label{eq:quasi-Miura}
u\mapsto v=u+\sum_{k\ge 1}\eps^k F_k(u,\ldots,u_{[3k/2]}),\quad F_k(u,\ldots,u_{[3k/2]})\in\cA_u[u_x^{-1}],\quad\deg F_k=k.
\end{gather}
By \cite[Corollary~1.9]{DLZ06}, this quasi-Miura transformation automatically reduces any bihamiltonian PDE~\eqref{eq:bihamiltonian PDE} to the dispersionless part. This immediately implies the following important corollary.
\begin{cor}\label{corollary:from P1P2 to Q}
Consider a PDE $\frac{\d u}{\d t}=Q$ that is bihamiltonian with respect to a pair of compatible Poisson operators $(\cP_1,\cP_2)$ satisfying $\cP_1^{[0]}\ne 0$ and $\cP_2^{[0]}\ne\lambda \cP_1^{[0]}$ for any $\lambda\in\mbC$. Then~$Q$ is uniquely determined by $Q|_{\eps=0}$ and the operators $\cP_1,\cP_2$.
\end{cor}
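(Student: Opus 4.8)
The plan is to leverage the quasi-triviality of the pair $(\cP_1,\cP_2)$ recorded just above, i.e.\ \cite[Theorem~1.8]{DLZ06} together with its consequence \cite[Corollary~1.9]{DLZ06}. First, by \cite[Theorem~1.8]{DLZ06} there is a quasi-Miura transformation $\Phi\colon u\mapsto v$ of the form~\eqref{eq:quasi-Miura}, depending only on the pair $(\cP_1,\cP_2)$, that reduces both Poisson operators to their dispersionless limits $\cP_1^{[0]},\cP_2^{[0]}$. The key structural feature I would exploit is that $\Phi$ is close to identity: its leading term is $v=u+O(\eps)$, so $v|_{\eps=0}=u$, and consequently $\Phi$ is invertible with an inverse $\Phi^{-1}$ that again depends only on $(\cP_1,\cP_2)$.

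Next I would apply \cite[Corollary~1.9]{DLZ06}: since $\frac{\d u}{\d t}=Q$ is bihamiltonian in the sense of~\eqref{eq:bihamiltonian PDE} with respect to $(\cP_1,\cP_2)$, the very same transformation $\Phi$ sends it to a flow $\frac{\d v}{\d t}=\tilde Q$ that coincides with its own dispersionless part, i.e.\ $\tilde Q=\tilde Q|_{\eps=0}$. In other words, in the $v$-coordinate all dispersive corrections are killed.

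It then remains to identify this dispersionless flow. Writing the transformed right-hand side as $D_Q(\Phi)=Q+\sum_{k\ge 1}\eps^k D_Q(F_k)$ and setting $\eps=0$ gives $D_Q(\Phi)|_{\eps=0}=Q|_{\eps=0}$; since $v|_{\eps=0}=u$, this says precisely that $\tilde Q|_{\eps=0}$ equals $Q|_{\eps=0}$ as a function of the leading variable. Combined with $\tilde Q=\tilde Q|_{\eps=0}$, we get $\tilde Q=Q|_{\eps=0}$, so the transformed flow is determined by the single datum $Q|_{\eps=0}$. Finally, transporting $\tilde Q$ back by $\Phi^{-1}$ recovers $Q$; as both $\Phi^{-1}$ and $\tilde Q$ are determined by $(\cP_1,\cP_2)$ and $Q|_{\eps=0}$, so is $Q$, which is the assertion.

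The only genuinely delicate point is the passage through the quasi-Miura transformation, which a priori introduces negative powers of $u_x$ (the target ring is $\cA_u[u_x^{-1}]$ in~\eqref{eq:quasi-Miura}). The essential input that controls this is \cite[Corollary~1.9]{DLZ06}, guaranteeing that, despite these rational corrections, the bihamiltonian flow becomes honestly dispersionless after reduction; once that is granted, uniqueness is a formal consequence of $\Phi$ being close to identity and hence invertible. I would therefore devote most of the care to cleanly invoking \cite[Theorem~1.8]{DLZ06} and \cite[Corollary~1.9]{DLZ06} and to checking that both $\Phi$ and $\Phi^{-1}$ depend only on the pair $(\cP_1,\cP_2)$ and not on the particular flow $Q$.
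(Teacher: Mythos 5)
Your proposal is correct and follows exactly the route the paper takes: the paper derives this corollary "immediately" from \cite[Theorem~1.8]{DLZ06} (quasi-triviality of the Poisson pair via a quasi-Miura transformation depending only on $(\cP_1,\cP_2)$) and \cite[Corollary~1.9]{DLZ06} (that this transformation reduces any bihamiltonian flow to its dispersionless part), which is precisely your argument. Your additional verification that the reduced flow equals $Q|_{\eps=0}$ and that transporting back by the inverse transformation recovers $Q$ is the same "immediate" step the paper leaves implicit.
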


\medskip

It was shown in~\cite{DLZ06} that under a different choice of a Poisson pencil
$$
\widetilde{\cP}_1=c \cP_2 +d \cP_1,\quad \widetilde{\cP}_2=a \cP_2 +b \cP_1,\quad ad-bc \neq 0,
$$
the canonical coordinate~$\tr$ for $\widetilde{\cP}_2+\lambda \widetilde{\cP}_1$ is given by $\tr=(ar+b)/(cr+d)$, and the pair of functions $(g(r),c(r))$ is transformed to
\beq\label{eq:change of g and c}
(\tg(\tr),\tc(\tr))=\biggl(\frac{(ad-bc)^2}{(cr+d)^3}g(r),\frac{cr+d}{ad-bc}c(r)\biggr)\bigg|_{r=r(\tr)}.
\eeq

\medskip

\begin{exa}{\ }
\begin{itemize}
\item For the compatible Poisson operators \eqref{KdVbihamiOp} of the KdV equation, we have 
$$
(g(r),c(r))= \lb1,\frac{1}{24}\rb.
$$

\smallskip

\item For the compatible Poisson operators \eqref{CHbihamiOp} of the CH equation, we have 
$$
(g(r),c(r))= \lb1,\frac{r}3\rb.
$$
\end{itemize}
\end{exa}

\medskip

\subsection{Perturbations of the RH hierarchy}

The following system of evolutionary PDEs of DZ type
\beq\label{perturbofRH}
u_{t_d}=\frac{u^d}{d!}u_x + O(\e),\quad d\ge 1,
\eeq
is called a {\it perturbation of the RH hierarchy}.
Such a perturbation is called {\em integrable} if the PDEs in~\eqref{perturbofRH} pairwise commute.
The RH hierarchy~\eqref{eq:RH hierarchy} is obviously integrable, we will also equip it with the flow $u_{t_0}=u_x$. 

\medskip

\begin{Def}[\cite{DLYZ16}]
A Hamiltonian integrable hierarchy of PDEs of DZ type
\beq\label{tausymhamiltonianhierarchy}
\frac{\p u}{\p t_n}=\cP\Bigl(\frac{\delta \oh_n}{\delta u}\Bigr),\quad n\ge 0,
\eeq
equipped with Hamiltonian densities $h_n$, is called a {\em $\tau$-symmetric Hamiltonian perturbation of the RH hierarchy} if $h_n|_{\e=0}=\frac{u^{n+2}}{(n+2)!}$, $\cP|_{\e=0}=\p_x$, the flow $\frac{\p}{\p t_0}$ is given by $\frac{\p u}{\p t_0}=u_x$, and the following two conditions are satisfied:
\begin{itemize}
\item[(i)] $\tau$-symmetry: 
$$
\frac{\p h_{p-1}(u;u_1,\dots;\e)}{\p t_q}=\frac{\p h_{q-1}(u;u_1,\dots;\e)}{\p t_p},\quad p,q\geq 0,
$$
where $h_{-1}:=u$;
\item[(ii)] $H_{-1}:=\int h_{-1} dx$ is a Casimir of the Poisson operator $\cP$.
\end{itemize}
\end{Def}

\medskip

Note that the Hamiltonian densities $h_n$ of a $\tau$-symmetric Hamiltonian perturbation of the RH hierarchy are uniquely determined by the Hamiltonians $\oh_n$ and the Poisson operator~$\cP$ (see, e.g.,~\cite[Remark~3.3]{BR25}). Moreover, all the Hamiltonians are uniquely determined by the Hamiltonian $\oh_1$ and the Poisson operator~$\cP$~\cite{LZ06}.

\medskip

The property of being a $\tau$-symmetric Hamiltonian perturbation of the RH hierarchy is preserved under {\em normal Miura-type transformations} that by definition have the form
$$
u\mapsto w=u+\e^2\p_x^2\biggl(\thin\sum_{k\geq 0}\e^k A_k\biggr),\quad A_k\in\mathcal{A}^{[k]}_u.
$$
Namely, the $\tau$-symmetric densities of the transformed hierarchy are given by 
$$
\widetilde{h}_{p}(w,w_1,\dots;\e)=h_{p}(u,u_1,\dots;\e)+\e^2\p_x\p_{t_{p+1}}\biggl(\thin\sum_{k\geq 0}\e^k A_k\biggr).
$$

\medskip

For a tuple $\od=(d_1,\ldots,d_n)\in\mbZ_{\ge 0}^n$, let $w_{\od}:=\prod_{i=1}^n w_{d_i}$. The following lemma was stated in~\cite{DLYZ16} with a partial proof, while a complete proof was given in~\cite{BR25}. 

\begin{lem}[\cite{BR25, DLYZ16}]\label{standardformthm}
For any $\tau$-symmetric Hamiltonian perturbation~\eqref{tausymhamiltonianhierarchy} of the RH hierarchy, there exists a unique normal Miura-type transformation that transforms~\eqref{tausymhamiltonianhierarchy} to the {\em generalized standard form}
\beq\label{standardformhierarchy-1}
\frac{\p w}{\p t_n}=\p_x\Bigl(\frac{\delta \widetilde{H}_n}{\delta w}\Bigr),\quad\widetilde{H}_n=\int\Big(\frac{w^{n+2}}{(n+2)!}+O(\e)\Big)dx,\quad n\geq 0,
\eeq
where 
\beq\label{standardformhierarchy-2}
\widetilde{H}_1=\int\Big(\thin\frac{w^3}{6}-\e^2\frac{a_0}{24}w_1^2+\sum_{k\geq 4}\e^{k}\sum_{\substack{\mu\in\mathcal{P}_{k},\,l(\mu)\ge 2\\ \mu_1=\mu_2,\,m_1(\mu)=0}}\alpha_{\mu} w_{\mu}\Big)dx,\quad \alpha_\mu\in\mbC.
\eeq
\end{lem}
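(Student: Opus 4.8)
The plan is to construct the required normal Miura-type transformation inductively in the order of $\e$, simultaneously reducing the Poisson operator $\cP$ to $\p_x$ and the Hamiltonian to the shape \eqref{standardformhierarchy-2}. Since the Hamiltonian densities are determined by $\oh_1$ and $\cP$, and all the Hamiltonians are determined by $\oh_1$ and $\cP$, it suffices to track the single density $h_1$ (equivalently the local functional $H_1$) together with $\cP$, and to use the explicit transformation rule for $\tau$-symmetric densities. The base of the induction is immediate: at order $\e^0$ the density is $\frac{u^3}{6}$ and $\cP|_{\e=0}=\p_x$ by hypothesis, while at order $\e^1$ there is nothing to do, because every element of $\cA_u^{[1]}$ is a total $x$-derivative, so the $\e^1$-term of $H_1$ vanishes as a local functional, and the antisymmetry of $\cP$ together with the Casimir condition~(ii) forces $\cP^{[1]}=0$. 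This already matches the absence of an $\e^1$ term in \eqref{standardformhierarchy-2}.

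For the inductive step at order $\e^N$ with $N\ge 2$, the only new parameter in a normal Miura-type transformation is $A_{N-2}\in\cA_u^{[N-2]}$, entering $\tu$ as $\e^N\p_x^2 A_{N-2}$. Substituting $u=u(w)$ into $h_1$ and using $h_1|_{\e=0}=\frac{u^3}{6}$, this parameter changes the $\e^N$-term of the density by $-\frac{w^2}{2}\p_x^2 A_{N-2}$, that is, modulo total $x$-derivatives, by $-(w_1^2+w w_2)A_{N-2}$, since $\p_x^2(\frac{w^2}{2})=w_1^2+ww_2$; at the same time it modifies $\cP^{[N]}$. I would first use $A_{N-2}$ to kill $\cP^{[N]}$, reducing $\cP$ to $\p_x$ through order $N$ — the antisymmetry and Casimir constraints cut the admissible coefficients of $\cP^{[N]}$ down enough that this is solvable (as the elementary computation at $N=1,2$ and the worked CH-type example $\p_x-\e^2\p_x^3\mapsto\p_x$ indicate). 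I would then show that, modulo total derivatives, integration by parts together with the subtraction of $(w_1^2+ww_2)A_{N-2}$ removes every monomial $w_\mu$ at this order except those with $l(\mu)\ge 2$, $\mu_1=\mu_2$ and $m_1(\mu)=0$, leaving the coefficients of the surviving monomials as functions of $w$ (at $N=2$ no such $\mu$ survives, and the leftover is the single term $-\frac{a_0(w)}{24}w_1^2$).

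The delicate point — and the one I expect to be the main obstacle — is that a single parameter $A_{N-2}$ must serve two purposes, normalizing $\widetilde{\cP}$ and normalizing $\tH_1$, so the naive count of freedom is insufficient; in particular, forcing $\widetilde{\cP}=\p_x$ already constrains the highest-$\p_x$ components of $A_{N-2}$ and leaves too little freedom to reduce the surviving Hamiltonian coefficients to constants by hand. The resolution should come from the $\tau$-symmetry: the densities $h_n$ are not independent but are rigidly tied together by $\p_{t_p}h_{q-1}=\p_{t_q}h_{p-1}$, and all of them, together with $\cP$, are determined by $\oh_1$. I expect that once $\cP$ has been fixed to $\p_x$ by the Poisson-reduction part of $A_{N-2}$, the $\tau$-symmetry constraints on the resulting densities force the coefficient functions of the surviving monomials $w_\mu$ (for $N\ge 4$) to be \emph{constants}, while at order $\e^2$ the freedom is entirely spent on the Poisson reduction and a genuine invariant function $a_0(w)$ survives. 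Proving that the number of $\tau$-symmetry constraints exactly matches the residual freedom at each order, and that the outcome is precisely \eqref{standardformhierarchy-2}, is the technical heart of the argument.

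Finally, for uniqueness I would argue that no nontrivial normal Miura-type transformation preserves both $\cP=\p_x$ and the shape \eqref{standardformhierarchy-2}. At order $\e^N$ the condition $\widetilde{\cP}=\p_x$ forces the leading components of $A_{N-2}$ to vanish, and the $\tau$-symmetry together with the standard-form constraints then forces the remaining components to vanish as well, so the residual kernel is trivial and the transformation is pinned down order by order. Equivalently, the pair consisting of the function $a_0$ and the constants $\alpha_\mu$ is a complete set of invariants of the orbit under normal Miura-type transformations, whence the standard-form representative, and the transformation producing it, are unique.
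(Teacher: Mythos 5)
The paper does not actually prove this lemma: it is imported from the literature, attributed to \cite{DLYZ16} (partial proof) and \cite{BR25} (complete proof), so there is no in-paper argument to compare against; your proposal has to stand on its own. As it stands, it is an outline with the decisive step missing. You correctly set up the order-by-order induction, correctly dispose of orders $\e^0$ and $\e^1$, and correctly identify the central tension: at order $\e^N$ the single parameter $A_{N-2}\in\cA_u^{[N-2]}$ must simultaneously reduce $\cP^{[N]}$ to zero and bring the $\e^N$-part of $\tH_1$ to the prescribed shape. But you resolve this tension only by asserting ``I expect that \dots\ the $\tau$-symmetry constraints force the coefficient functions \dots\ to be constants'' and calling the verification ``the technical heart of the argument.'' That verification \emph{is} the lemma. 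Already at order $\e^2$ the problem is concrete: making the $\p_x^3$-coefficient of the transformed operator vanish forces $A_0'=-\tfrac12 A_{2,3}$, which exhausts the freedom in $A_0$ up to an irrelevant additive constant, while the surviving local functional $\int\gamma(w)\,w_1^2\,dx$ must still come out with \emph{constant} $\gamma$. Your write-up instead ends order $\e^2$ with ``a genuine invariant function $a_0(w)$,'' which contradicts the statement being proved ($a_0\in\mbC$, cf.\ Conjecture~\ref{conjecture:DLYZ} and formula~\eqref{eq:phi depending on alpha}); deriving the constancy from $\tau$-symmetry and the commutativity of the flows is precisely the nontrivial content, and it is absent.

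The same gap recurs at every even order. For $k=4$, a general degree-$4$ local functional is $\int\bigl(p(w)w_2^2+q(w)w_1^4\bigr)dx$ with two free functions of $w$, whereas the standard form retains the single constant $\alpha_{(2^2)}$; integration by parts alone can neither remove $q(w)w_1^4$ nor reduce $p(w)$ to a constant, so the reduction must come from the integrability and $\tau$-symmetry constraints, which you do not analyze (your phrase ``integration by parts \dots\ removes every monomial except those with $\mu_1=\mu_2$, $m_1(\mu)=0$'' conflates choosing representatives in $\hcF_w$ with genuine vanishing statements). The uniqueness argument inherits the same deferral. A more workable route, closer to \cite{BR25}, is to decouple the two jobs: first trivialize $\cP$ using the vanishing of the relevant Poisson cohomology of $\p_x$, checking via the Casimir condition that the trivializing transformation can be taken normal; then classify the residual normal transformations preserving $\p_x$ and use the identities $\p_{t_p}h_{q-1}=\p_{t_q}h_{p-1}$ to pin down $\tH_1$. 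Either way, the constancy of $a_0$ and of the $\alpha_\mu$, and the exact list of surviving monomials, must be derived rather than expected.
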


\medskip

Denote by $\Mi^{(2)}$ the group of Miura-type transformations~\eqref{Miuratypetrans} that are close to identity and that satisfy $M^{[1]}=0$. This group contains the group of normal Miura-type tranformations. The following statement is slightly stronger than the previous lemma.

\begin{lem}[see Remark~3.3 and Proposition~3.4 in~\cite{BR25}]\label{standardformmiura}
For any $\tau$-symmetric Hamiltonian perturbation~\eqref{tausymhamiltonianhierarchy} of the RH hierarchy, there exists a unique Miura-type transformation from the group $\Mi^{(2)}$ that 
transforms~\eqref{tausymhamiltonianhierarchy} to the generalized standard form~\eqref{standardformhierarchy-1}--\eqref{standardformhierarchy-2}. This Miura-type transformation is necessarily normal. 
\end{lem}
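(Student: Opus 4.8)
The plan is to prove Lemma~\ref{standardformmiura} by bootstrapping from Lemma~\ref{standardformthm}, which already furnishes a unique \emph{normal} Miura-type transformation to the generalized standard form. The new content is twofold: first, that among the larger group $\Mi^{(2)}$ (Miura transformations close to identity with $M^{[1]}=0$) there is \emph{still} a transformation to the generalized standard form; and second, that any such transformation is automatically normal, and hence coincides with the one from Lemma~\ref{standardformthm}, giving uniqueness. The existence half is immediate: the normal transformation provided by Lemma~\ref{standardformthm} already lies in $\Mi^{(2)}$, since a normal transformation $u\mapsto u+\e^2\p_x^2(\sum_k\e^k A_k)$ has trivial $\e^1$-term and so has $M^{[1]}=0$. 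Thus the whole difficulty is concentrated in showing that membership in $\Mi^{(2)}$ plus arrival at the generalized standard form forces normality.

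\medskip

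\noindent\textbf{Reducing to a stabilizer computation.} Suppose we have two transformations in $\Mi^{(2)}$ both sending \eqref{tausymhamiltonianhierarchy} to generalized standard form. The normal one $\Phi_{\mathrm{norm}}$ exists by Lemma~\ref{standardformthm}; let $\Phi$ be an arbitrary one. Then $\Psi:=\Phi\circ\Phi_{\mathrm{norm}}^{-1}$ lies in $\Mi^{(2)}$ (the group property and the fact that normal transformations sit inside $\Mi^{(2)}$) and fixes the generalized standard form \eqref{standardformhierarchy-1}--\eqref{standardformhierarchy-2}. So the lemma follows once I show: \emph{the only element of $\Mi^{(2)}$ preserving the generalized standard form is the identity, and a fortiori every such transformation is normal.} First I would set up the order-by-order (in $\e$) analysis of the condition that $\Psi$ preserves both the Poisson operator $\p_x$ (since both hierarchies are in standard form, $\cP=\p_x$) and the Hamiltonians $\widetilde H_n$ in the constrained shape \eqref{standardformhierarchy-2}.

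\medskip

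\noindent\textbf{Core of the argument.} Writing $\Psi\colon w\mapsto w+\sum_{k\ge2}\e^k M^{[k]}$ (the $\e^1$-term vanishes by the $\Mi^{(2)}$ condition), I would exploit that $\Psi$ must send the Poisson operator $\p_x$ to $\p_x$ again. The transformation rule for Poisson operators recorded in the excerpt, namely $\widetilde\cP=\sum_{k,\ell}(-1)^\ell\frac{\p\tu}{\p u_k}\circ\p_x^k\circ\cP\circ\p_x^\ell\circ\frac{\p\tu}{\p u_\ell}$, specialized to $\cP=\widetilde\cP=\p_x$, yields strong constraints on each $M^{[k]}$; in the theory of such deformations this condition is exactly what forces a transformation preserving $\p_x$ to be normal, i.e.\ of the form $w\mapsto w+\e^2\p_x^2(\cdots)$. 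I expect that the vanishing of $M^{[1]}$ together with the preservation of $\p_x$ propagates inductively to show each $M^{[k]}$ is itself a total $x$-derivative of the appropriate type, which is precisely the statement that $\Psi$ is normal. Combined with the invariance of the $\tau$-symmetric densities under normal transformations (the explicit formula $\widetilde h_p=h_p+\e^2\p_x\p_{t_{p+1}}(\sum_k\e^k A_k)$ quoted before the lemma), preserving the constrained Hamiltonian form \eqref{standardformhierarchy-2}---in particular the normalization of the coefficient $a_0$ and the absence of lower-degree and $m_1(\mu)\ne0$ terms---then pins down all the $A_k$ and forces $\Psi=\mathrm{id}$. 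This last step is essentially the uniqueness already present in Lemma~\ref{standardformthm}, now transported through $\Psi$.

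\medskip

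\noindent\textbf{Main obstacle.} The delicate point is the claim that \emph{$M^{[1]}=0$ plus preservation of $\p_x$ implies normality}. Preserving $\p_x$ alone constrains a Miura transformation but does not by itself make it normal---one could a priori have odd-order corrections or non-$\p_x^2$-factored even terms. The crux is to verify, order by order in $\e$, that the conjugation of $\p_x$ by $\Psi$ reduces to the identity operator only when each correction term $M^{[k]}$ is a total $x$-derivative, and in fact a \emph{second} $x$-derivative, so that $\Psi$ has the normal shape; the hypothesis $M^{[1]}=0$ is what kills the potential obstruction at the first nontrivial order and lets the induction start. I would handle this either by a direct graded computation of $\widetilde\cP-\p_x$ at each order, or by invoking the classification of Miura transformations preserving a given Poisson operator of hydrodynamic type; citing the relevant structural result from~\cite{BR25} (Remark~3.3, Proposition~3.4) to close this step cleanly is the efficient route, since those references already contain the statement in the precise form needed here.
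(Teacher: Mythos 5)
The paper does not actually prove this lemma --- it is quoted directly from \cite{BR25} (Remark~3.3 and Proposition~3.4), so there is no internal proof to compare against. Judged on its own terms, your proposal gets the easy half right (the normal transformation of Lemma~\ref{standardformthm} lies in $\Mi^{(2)}$, so existence is immediate) and the reduction to a stabilizer computation is sound: the composition $\Psi=\Phi\circ\Phi_{\mathrm{norm}}^{-1}$ does lie in $\Mi^{(2)}$ and does conjugate $\p_x$ to $\p_x$, since both factors carry the original Poisson operator to $\p_x$.

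However, the step you yourself flag as the ``main obstacle'' is not merely delicate --- it is false as stated. Preservation of $\p_x$ together with $M^{[1]}=0$ does \emph{not} imply normality. The close-to-identity Miura transformations preserving $\p_x$ are the canonical transformations $u\mapsto \exp(D_{\p_x\delta\overline{K}/\delta u})(u)$, whose leading correction is $\p_x$ of a variational derivative, not in general $\p_x^2$ of anything. Concretely, take $\overline{K}=\e^3\int c(u)u_1^2\,dx$ with $c'\ne 0$: the resulting transformation is $u\mapsto u-\e^3\bigl(c''u_1^3+4c'u_1u_2+2cu_3\bigr)+O(\e^6)$, which preserves $\p_x$ and has vanishing $\e^1$- and $\e^2$-terms (so it lies in $\Mi^{(2)}$), yet its $\e^3$-term equals $\p_x^2(a(u)u_1)$ for no $a$ unless $c'=0$. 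So the induction you propose cannot even start from the Poisson operator alone. What actually kills such elements is the \emph{Hamiltonian} constraint: in this example the transformation adds $-\e^3\int c(w)w_1^3\,dx$ to $\tH_1$, a term with $m_1(\mu)\ne 0$ that is forbidden by the shape \eqref{standardformhierarchy-2}. A correct proof must therefore couple the canonical-transformation description of the stabilizer of $\p_x$ with the constraints on $\tH_1$ (and the $\tau$-structure) order by order in $\e$; this interplay is precisely the content of \cite[Proposition~3.4]{BR25}, and deferring it to that citation, as you do in your last sentence, leaves the genuinely new assertion of the lemma unproved.
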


\medskip

\begin{conj}[\cite{DLYZ16}]\label{conjecture:DLYZ}
Consider the generalized standard form of a $\tau$-symmetric Hamiltonian perturbation of the RH hierarchy.
\begin{enumerate}
\item If $a_0=0$, then $\alpha_\mu=0$ for all $\mu$.

\smallskip

\item If $a_0\ne 0$, then the coefficients $a_0,\alpha_{(2^2)}, \alpha_{(2^3)},\ldots$ uniquely determine all the other coefficients $\alpha_{\mu}$; and also the coefficients $\alpha_\mu$ are zero if $|\mu|$ is odd. Moreover, the coefficients $a_0\in\mbC^*$ and $\alpha_{(2^2)}, \alpha_{(2^3)},\ldots\in\mbC$ can be arbitrary.
\end{enumerate}
\end{conj}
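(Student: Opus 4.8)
\emph{Plan of proof.} The plan is to read off the constraints on the coefficients $a_0,\alpha_\mu$ directly from the two defining properties of the generalized standard form: pairwise commutativity of the flows and $\tau$-symmetry. Since the Poisson operator is fixed to $\p_x$ and, by \cite{LZ06}, the whole hierarchy together with its $\tau$-symmetric densities is reconstructed from $\widetilde{H}_1$ alone, the question becomes which differential polynomials $\widetilde{H}_1$ of the shape \eqref{standardformhierarchy-2} extend to a full $\tau$-symmetric Hamiltonian perturbation. I would encode this as the solvability, for every $n\ge 2$, of $\{\widetilde{H}_1,\widetilde{H}_n\}_{\p_x}=0$ with the prescribed dispersionless limit $\frac{w^{n+2}}{(n+2)!}$, together with mutual commutativity $\{\widetilde{H}_m,\widetilde{H}_n\}_{\p_x}=0$. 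Expanding in $\e$ and using that $\deg\e=-1$, these conditions become a hierarchy of polynomial relations among the $\alpha_\mu$, graded by the weight $k=|\mu|$.

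The core of the argument is an induction on $k$. At order $\e^k$ the relevant relations are linear in the top coefficients $\{\alpha_\mu:|\mu|=k\}$, with an inhomogeneous part that is polynomial in the lower coefficients $\{\alpha_\nu:|\nu|<k\}$ and in $a_0$; write $L_k$ for the associated linear operator on the top unknowns. The key structural claim I would aim to prove is that, when $a_0\ne 0$, the kernel of $L_k$ is spanned exactly by the single ``diagonal'' direction $\alpha_{(2^{k/2})}$ for even $k$ and is trivial for odd $k$, and that $L_k$ is surjective onto the admissible inhomogeneities. Granting this, uniqueness follows: at each order the off-diagonal coefficients are forced by the lower data, so $a_0$ together with $\alpha_{(2^2)},\alpha_{(2^3)},\ldots$ determines all the remaining $\alpha_\mu$. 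The degenerate case $a_0=0$ of part~(1) falls out of the same computation, since there $L_k$ becomes injective on the entire space of top unknowns, so the inductive hypothesis $\alpha_\nu=0$ for $|\nu|<k$ propagates to order $k$ and forces all $\alpha_\mu=0$.

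The vanishing of the odd coefficients in part~(2) I would then deduce, rather than prove by hand, from uniqueness via an involution. The substitution $\e\mapsto-\e$ fixes $\p_x$ and the dispersionless limits, preserves the bracket $\{\cdot,\cdot\}_{\p_x}$ (which is $\e$-independent), and hence sends any valid standard-form hierarchy to another one; on coefficients it acts by $\alpha_\mu\mapsto(-1)^{|\mu|}\alpha_\mu$ while fixing $a_0$ and every diagonal coefficient $\alpha_{(2^m)}$ (all of even weight). Since these fixed data already determine the whole solution by uniqueness, the transformed hierarchy must coincide with the original, forcing $\alpha_\mu=(-1)^{|\mu|}\alpha_\mu$ and hence $\alpha_\mu=0$ whenever $|\mu|$ is odd. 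For the arbitrariness assertion, surjectivity of $L_k$ already guarantees that the determining equations remain consistent and solvable for every choice of the free parameters; to confirm that $a_0\in\mbC^*$ and the $\alpha_{(2^m)}$ are genuinely independent I would exhibit the Hodge hierarchies with $R$-matrix $R(z)=\exp(\sum_{i\ge 1}r_{2i-1}z^{2i-1})$ as an explicit realization and check that the leading dependence of $(a_0,\alpha_{(2^2)},\alpha_{(2^3)},\ldots)$ on $(r_1,r_3,r_5,\ldots)$ is triangular with nonvanishing diagonal.

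The main obstacle is the structural claim about $L_k$: that its kernel is \emph{exactly} the one-dimensional diagonal direction in the even sector and trivial in the odd sector, uniformly in $k$ and with $a_0$ entering as the decisive nonvanishing factor. Establishing this requires controlling the combinatorics of the variational brackets $\{\widetilde{H}_1,\widetilde{H}_n\}_{\p_x}$ and the $\tau$-symmetry relations at arbitrarily high weight, and ruling out both unexpected extra kernel directions (which would break uniqueness) and unexpected obstructions in the image of $L_k$ (which would break arbitrariness). It is precisely this uniform high-order control that keeps the statement at the level of a conjecture; a complete proof would plausibly require a closed generating-function or loop-equation description of the normal form in the spirit of \cite{DLZ06, DLYZ16}, or a cohomological computation of the relevant space of deformations.
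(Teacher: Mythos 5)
The statement you are asked about is stated in the paper as a \emph{conjecture} (attributed to \cite{DLYZ16}), not a theorem: the paper offers no proof, only the remarks that it has been verified up to $O(\e^{13})$ in \cite{DLYZ16} and that the final assertion of Part~2 (arbitrariness of $a_0\in\mbC^*$ and $\alpha_{(2^2)},\alpha_{(2^3)},\ldots$) is proved in \cite[Theorem~3.9]{BR25}. Your proposal is therefore correctly calibrated: you set up the order-by-order determining equations coming from commutativity and $\tau$-symmetry, reduce everything to a structural claim about the linear operators $L_k$ on the top-weight coefficients (kernel exactly the diagonal direction $\alpha_{(2^{k/2})}$ when $a_0\ne 0$, trivial kernel in odd weight, surjectivity onto the admissible inhomogeneities), and you explicitly flag that this uniform high-order claim is the open content. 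That is an accurate diagnosis of why the statement remains a conjecture, so there is nothing in the paper for your plan to fall short of.

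Two points of comparison with what the paper does establish. First, the pieces of your plan that are conditional only on uniqueness are sound and match known results: the $\e\mapsto-\e$ involution argument for the vanishing of odd-weight coefficients is a legitimate consequence of the (conjectural) uniqueness, and your proposed realization of arbitrary $(a_0,\alpha_{(2^2)},\alpha_{(2^3)},\ldots)$ by Hodge hierarchies with triangular dependence on $(r_1,r_3,\ldots)$ is precisely the content of formula~\eqref{trigorelation} and \cite[Theorem~3.9]{BR25}, so that part of Part~2 is in fact already a theorem. Second, be aware that your claim that Part~1 ``falls out of the same computation'' because $L_k$ becomes injective when $a_0=0$ is itself an unproven structural assertion of the same difficulty; the only place the paper proves a Part-1-type statement is under the additional bihamiltonian hypothesis (Section~\ref{bihamiHodge}), where it uses a completely different mechanism --- vanishing of the central invariant plus quasi-triviality of the Poisson pencil \cite{DLZ06,LZ05} --- rather than an analysis of the determining equations. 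If you want to make progress beyond a plan, that central-invariant route (or the generating-function/loop-equation approach you mention) is where the leverage currently lies, but as written your proposal does not prove the conjecture, and neither does the paper.
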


\medskip

This conjecture was verified~\cite{DLYZ16} at the approximation up to $O(\e^{13})$. In particular, the authors of~\cite{DLYZ16} found the following formula in the case $a_0\ne 0$:
\begin{align}
\tH_1=&\int\biggl(\thin\frac{w^3}{6}-\e^2\frac{a_0}{24}w_1^2+\e^4\alpha_{(2^2)}w_2^2+\e^6\Bigl(\alpha_{(2^3)}w_2^3-\frac{240\alpha_{(2^2)}^2}{7a_0}w_3^2\Bigr)\notag\\
&+\e^8\Bigl(\alpha_{(2^4)}w_2^4-\frac{2376\alpha_{(2^2)}\alpha_{(2^3)}}{7a_0}w_2w_3^2+\Big(\frac{a_0\alpha_{(2^3)}}{35}+\frac{8640\alpha_{(2^2)}^3}{7a_0^2}\Big)w_4^2\Bigr)+O(\e^{10})\biggr)dx.\label{eq:tH1 up to e12}
\end{align}
The fact that the coefficients $\alpha_{(2^2)}, \alpha_{(2^3)},\ldots\in\mbC$ with $a_0\in\mbC^*$ can be arbitrary 
was proved in~\cite[Theorem~3.9]{BR25} (see the details in Section~\ref{subsection:CohFTs,FCohFTs,DRhierarchies}).

\medskip

Following \cite{ALM0, ALM} (see~also~\cite{BR25}), let us now consider evolutionary PDEs of DZ type of the form 
$$
\frac{\p u}{\p t} = \p_x \omega, \quad \omega=\frac{u^2}{2}+\sum_{k\ge 1}\e^k\sum_{\lambda\in\cP_k}c_\lambda(u)u_\lambda,\quad  c_\lambda(u)\in\cS(u).
$$ 
A Miura-type transformation that is close to identity preserves this class of PDEs if and only if it has the form
\beq\label{conserve-preserve-miura}
u\mapsto w=u+\e\p_x f,\quad f\in\hcA^{[0]}_u.
\eeq
Moreover, the functions $c_k(u)$, $k\ge 1$, are invariant under these transformations, and they are called the {\em Miura invariants}.

\medskip

\begin{exa}{\ }
\begin{itemize}
\item Writing the CH equation in the form
\beq\label{CHequationform2}
v_t=(1-\eps^2\d_x^2)^{-1}\d_x\Bigl(\thin\frac{v^2}{2}-\e^2 \Bigl(\thin\frac{v_x^2}{6}+\frac{v v_{xx}}{3}\Bigr)\Bigr),  
\eeq
we immediately compute its Miura invariants: $c_{2g}(v)=\frac{2}{3}v$.

\smallskip

\item Writing the DP equation in the form
\beq\label{DPequationform2}
v_t=(1-\eps^2\d_x^2)^{-1}\d_x\Bigl(\thin\frac{v^2}{2}-\e^2 \Bigl(\thin\frac{v_x^2}{4}+\frac{v v_{xx}}{4}\Bigr)\Bigr), 
\eeq
we immediately compute its Miura invariants: $c_{2g}(v)=\frac{3}{4}v$.
\end{itemize}
\end{exa}

\medskip

Consider now integrable hierarchies of scalar conservation laws that are perturbations of the RH hierarchy:
\beq\label{conservelawhierarchy}
\frac{\p u}{\p t_n} = \p_x \omega_n, \quad \omega_n=\frac{u^{n+1}}{(n+1)!}+O(\e)\in\hcA^{[0]}_u,\quad  n\geq 0.
\eeq 
By definition, the {\em Miura invariants} of the perturbation are the Miura invariants of the flow $\frac{\p}{\p t_1}$. Note that all the flows $\frac{\d}{\d t_n}$ are uniquely determined by the flow $\frac{\d}{\d t_1}$~\cite{LZ06}. It was proved in~\cite{ALM0, ALM} that for any hierarchy of the form~\eqref{conservelawhierarchy} there exists a unique Miura-type transformation of the form~\eqref{conserve-preserve-miura} that transforms~\eqref{conservelawhierarchy} to the {\em normal form}:
\beq\label{ALMnormalform}
\frac{\p w}{\p t_n}=\p_x\widetilde{\omega}_n,\quad 
\widetilde{\omega}_1=\frac{w^2}{2}+a(w)w_x
+\sum_{k\geq 2}\e^{k} \sum_{\lambda\in\mathcal{P}_{k}\atop m_1(\lambda)=0} c_{\lambda}(w)w_{\lambda},\quad a(w),c_\lambda(w)\in\mathcal{S}(w).
\eeq

\medskip

\begin{conj}[\cite{ALM}]\label{conjecture:ALM}
Consider the normal form of an integrable perturbation of the RH hierarchy~\eqref{conservelawhierarchy} and suppose that $a(w)=0$ and $c_2(w)\ne 0$. Then the following statements are true:
\begin{enumerate}
\item $c_{\lambda}=0$ for all $\lambda$ with odd $|\lambda|$;

\smallskip

\item the coefficients $c_{2}(w),c_{4}(w),\dots$ uniquely determine all the other coefficients~$c_{\lambda}(w)$;

\smallskip

\item the coefficients $c_{2}(w),c_{4}(w),\dots$ are all free functional parameters.
\end{enumerate}
\end{conj}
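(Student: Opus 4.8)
The plan is to work entirely with the single first flow $\p_{t_1}=\p_x\tomega_1$ in the normal form~\eqref{ALMnormalform}: by~\cite{LZ06} every higher flow $\p_{t_n}$ is uniquely reconstructed from $\p_{t_1}$ together with its own dispersionless limit $\frac{w^{n+1}}{(n+1)!}w_x$, and by the uniqueness of the normal form~\cite{ALM0, ALM} the functions $a(w)$ and $c_\lambda(w)$ are genuine invariants of the hierarchy. Thus the whole content of the conjecture is a statement about which first flows extend to a commuting tower, and integrability is encoded in the equations $D_{\p_x\tomega_m}(\p_x\tomega_n)=D_{\p_x\tomega_n}(\p_x\tomega_m)$. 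I would expand these as power series in $\e$ and run a single induction on the $\e$-order $k$: at each step the order-$k$ part of the commutativity relation has the shape $L_k\big((c_\lambda)_{|\lambda|=k}\big)=R_k$, where $L_k$ is an explicit linear differential operator in $w$ acting on the new top coefficients $c_\lambda$ with $|\lambda|=k$, $m_1(\lambda)=0$, and $R_k$ is a differential polynomial assembled from coefficients of strictly lower order (and the dispersionless profile).

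The parity statement~(1) and the determination statement~(2) should both reduce to analyzing $\ker L_k$, once one controls $R_k$. I would first observe that the formal sign symmetry $\e\mapsto-\e$, $x\mapsto-x$, $t_n\mapsto-t_n$ fixes~\eqref{ALMnormalform} tautologically, so parity cannot follow from symmetry alone and must be extracted from the integrability equations — which is precisely why the hypothesis $a(w)=0$ is indispensable. Grade each coefficient $c_\lambda$ by $|\lambda|\bmod 2$. Under $a(w)=0$ and the inductive hypothesis that $c_\lambda=0$ for all odd $|\lambda|<k$, a parity count shows that at odd order $k$ every monomial of $R_k$ contains at least one factor of odd order $<k$ (the order-$1$ factor being killed by $a=0$), so $R_k=0$; statement~(1) then follows from the injectivity $\ker L_{2m+1}=0$. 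For~(2) the same computation at even order $k=2m$ should exhibit $\ker L_{2m}$ as the one-dimensional line generated by varying the single-part (quasi-linear) coefficient $c_{(2m)}(w)$, so that the solution set of $L_{2m}\big((c_\lambda)\big)=R_{2m}$ is an affine line in which $c_{(2m)}$ is free and every non-diagonal $c_\lambda$ is pinned down by $c_{(2)},\dots,c_{(2m)}$.

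The decisive and genuinely open part is the existence statement~(3). Combined with the kernel computation above, it is equivalent to the unobstructedness condition $R_k\in\operatorname{im}L_k$ for every admissible lower-order history and every choice of the free data $c_{(2)}(w),c_{(4)}(w),\dots$; equivalently, $\operatorname{coker}L_{2m}$ must never obstruct the free prescription of $c_{(2m)}(w)$. I expect this cokernel vanishing to be the main obstacle. The natural source of explicit solutions is the family of DR hierarchies attached to rank-$1$ F-CohFTs, but as recalled in the introduction and established in~\cite{BR25} these realize only \emph{constant} Miura invariants, and the linear-combination/Miura construction of the present paper extends this only to invariants depending \emph{linearly} on $w$. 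Hence one needs either a genuinely new family of integrable conservation-law hierarchies realizing arbitrary functional invariants, or a direct proof that the obstruction $\operatorname{coker}L_{2m}$ vanishes uniformly in $m$. Establishing such unobstructedness for all orders at once — rather than verifying it to finite order in $\e$, as was done in~\cite{DLYZ16} for the Hamiltonian analogue — is exactly what keeps the statement at the level of a conjecture.
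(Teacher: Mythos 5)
This statement is a \emph{conjecture} (attributed to Arsie--Lorenzoni--Moro), and the paper offers no proof of it: it explicitly records that Parts~1 and~2 have only been verified in the original reference up to $O(\e^7)$, and that Part~3 is known only in the special case where the $c_{2g}$ are \emph{constants} (via the DR hierarchies of rank-$1$ F-CohFTs, \cite[Theorem~4.4]{BR25}). So there is no proof in the paper to compare yours against, and your proposal does not supply one either: it is a research program whose load-bearing steps are all left open. Concretely, your reduction of the problem to an order-by-order linear system $L_k\bigl((c_\lambda)_{|\lambda|=k}\bigr)=R_k$ is the standard and reasonable framework (it is how the finite-order verifications in the literature are organized), but every claim you then need about this system is asserted rather than proved: the injectivity $\ker L_{2m+1}=0$ needed for Part~1, the statement that $\ker L_{2m}$ is exactly the line spanned by the quasi-linear coefficient $c_{(2m)}(w)$ needed for Part~2, and the surjectivity/unobstructedness $R_{2m}\in\operatorname{im}L_{2m}$ needed for Part~3. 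Even the parity argument that $R_k=0$ at odd order $k$ under the inductive hypothesis is only gestured at (``a parity count shows''), and one would have to check that the hypothesis $a(w)=0$ really kills all odd contributions rather than just the order-$1$ one.

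You are candid that Part~3 is the genuinely open core, and your diagnosis of why it is hard is sound: the only known existence results realize constant Miura invariants, and the construction in Section~\ref{Miura-invariants-transf} of this paper extends that only to invariants depending linearly on $w$, which is far from the arbitrary functional freedom the conjecture asserts. But you should be equally explicit that Parts~1 and~2 are also not established by your argument: without an actual analysis of the operators $L_k$ (their precise form depends on the full structure of the commutativity equations, not just on a grading), the kernel and cokernel computations you invoke are exactly the content of the conjecture, not a route to it. In short, the proposal correctly identifies the shape of the problem but proves none of the three assertions; the statement remains a conjecture both in the paper and after your attempt.
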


\medskip

We call Conjecture~\ref{conjecture:ALM} the {\em ALM conjecture}. Parts~1 and~2 of the ALM conjecture were verified in~\cite{ALM} at the approximation up to $O(\e^7)$. The fact that the coefficients $c_{2},c_{4},\dots$ can be arbitrary complex constants with $c_2\ne 0$ 
was proved in~\cite[Theorem~4.4]{BR25} (see the details in Section~\ref{subsection:CohFTs,FCohFTs,DRhierarchies}).

\medskip

Any $\tau$-symmetric Hamiltonian perturbation of the RH hierarchy written in the generalized standard form has the normal form~\eqref{ALMnormalform}. In particular, from~\eqref{eq:tH1 up to e12} we obtain
\begin{align*}
\tomega_1=\frac{\delta\tH_1}{\delta w}=&\frac{w^2}{2}+\frac{a_0}{12} \e^2 w_{2}+2 \alpha_{(2^2)} \e^4 w_{4}+\e^6 \biggl(\frac{480 \alpha_{(2^2)}^2}{7 a_0}w_{6}+6\alpha_{(2^3)} w_{3}^2+6 \alpha_{(2^3)} w_{4} w_{2}\biggr)\notag\\
&+\e^8  \biggl(\biggl(\frac{17280 \alpha_{(2^2)}^3}{7 a_0^2}+\frac{2}{35} a_0 \alpha_{(2^3)} \biggr)w_{8}+\frac{9504 \alpha_{(2^3)} \alpha_{(2^2)}}{7 a_0}w_{4}^2+\frac{4752 \alpha_{(2^3)} \alpha_{(2^2)}}{7 a_0}w_{6}w_{2}\notag\\
&\hspace{1.2cm}+\frac{14256 \alpha_{(2^3)} \alpha_{(2^2)}}{7 a_0}w_{3} w_{5}+12 \alpha_{(2^4)} w_{4}w_{2}^2+24 \alpha_{(2^4)} w_{3}^2 w_{2}\biggr)\notag\\
&+O(\e^{10}).
\end{align*}
In~\cite[Theorem 4.8]{BR25}, Buryak--Rossi proved that the ALM conjecture implies Part~2 of 
Conjecture~\ref{conjecture:DLYZ}.

\medskip

\section{Integrable systems associated to rank-$1$ F-CohFTs}\label{review}

In this section, we study relations between integrable hierarchies and the topology of the Deligne--Mumford moduli spaces of stable algebraic curves $\Mgnbar$ via F-CohFTs.

\medskip

\subsection{CohFTs, F-CohFTs, and the DR hierarchies}\label{subsection:CohFTs,FCohFTs,DRhierarchies}

Cohomological field theories (CohFTs) and F-CohFTs were introduced in \cite{KM} and \cite{BR21}, respectively. For the general definitions, and a discussion on their differences, the reader is referred for instance to~\cite[Section 3.1]{ABLR23}. In this paper, we will only consider CohFTs and F-CohFTs of rank~$1$, with phase space $V=\mbC$ and unit $e=1$. In the case of CohFTs, the metric~$\eta$ will be always given by $\eta(e,e)=1$.

\medskip

So, for this paper, a \emph{CohFT} is a family of cohomology classes 
$$
c_{g,n}\in H^{\mathrm{e}}(\oM_{g,n}),\quad 2g-2+n>0,
$$ 
that are invariant under permutations of the $n$ marked points and satisfy the following axioms:
\begin{enumerate}
\item [1)] $c_{0,3}=1$, $\pi^*c_{g,n} = c_{g,n+1}$ for the map $\pi\colon\oM_{g,n+1} \to \oM_{g,n}$ forgetting the last marked point;
\item [2)] $\sigma^* c_{g_1+g_2,n_1+n_2}= c_{g_1,n_1+1} \times c_{g_2,n_2+1}$ for the map $\sigma\colon \oM_{g_1,n_1+1}\times \oM_{g_2,n_2+1} \to \oM_{g_1+g_2,n_1+n_2}$ joining two stable curves at their last marked points to form a nodal curve;
\item [3)] $\tsigma^* c_{g+1,n}= c_{g,n+2}$ for the map $\tsigma\colon \oM_{g,n+2}\to \oM_{g+1,n}$ gluing last two marked points on a stable curve to form a nodal curve.
\end{enumerate}
The definition of an \emph{F-CohFT} differs from the one of a CohFT in the following:
\begin{itemize}
\item[a)] We require $n\ge 1$, while invariance is under permutations of the last $n-1$ marked points only.
\item[b)] We do not require Axiom~3.
\end{itemize}
So restricting a CohFT to moduli spaces with $n\ge 1$ gives an F-CohFT.

\medskip

The \emph{trivial CohFT} is given by $c_{g,n}^{\triv}:=1\in H^0(\oM_{g,n})$. The corresponding F-CohFT will be denoted by the same symbol. For any formal power series $R(z)\in 1+z\mbC[[z]]$ satisfying $R(z)R(-z)=1$, there is a CohFT denoted by $\{R.c^{\triv}_{g,n}\}$ and defined by the standard Givental formula for CohFTs, as described for instance in \cite{PPZ15}. Moreover, an arbitrary CohFT has this form~\cite{Tel12}. The CohFT $\{R.c^{\triv}_{g,n}\}$ can be alternatively described as follows:
\beq\label{sigmaCohFT}
R.c^{\triv}_{g,n}=e^{\sum_{i\geq 1} \sigma_{2i-1} \gamma_{2i-1}}, \quad \sigma_{2i-1}\in\CC,
\eeq 
where $R(z) = e^{\sum_{i\ge 1} r_{2i-1}z^{2i-1}}$ with $r_{2i-1}=\frac{B_{2i}}{(2i)!}\sigma_{2i-1}$ and $\gamma_j$ is the $j$th component 
of the Chern character of the Hodge bundle~$\mbE$. In~\cite[Section~4]{ABLR23}, Arsie--Buryak--Lorenzoni--Rossi defined an F-CohFT $\{R.c^{\triv}_{g,n+1}\}$ for an arbitrary $R(z)\in 1+z\mbC[[z]]$ by a formula similar to Givental's formula where the sum runs again over stable trees only.

\medskip

For any F-CohFT $\{c_{g,n+1}\}$, we have the associated \emph{DR hierarchy}
$$
\frac{\d u}{\d t_d}=\d_x P_d,\quad d\ge 0,
$$
where 
$$
P_d:=\hspace{-2.5mm}\sum_{\substack{g,n\ge 0\\\od=(d_1,\ldots,d_n)\in\mbZ_{\ge 0}^n\\d_1+\ldots+d_n=2g}}\hspace{-2.5mm}\frac{\e^{2g}}{n!}\Coef_{a_1^{d_1}\cdots a_n^{d_n}}\lb\int_{\oM_{g,n+2}}\hspace{-0.5cm}\psi_2^d\lambda_g\DR_g\lb-\sum a_i,0,a_1,\ldots,a_n\rb c_{g,n+2}\rb u_{\od}.
$$
Here $\psi_i\in H^2(\oM_{g,n},\mbQ)$, for $1\leq i\leq n$, denotes the first Chern class of the $i$-th \emph{tautological line bundle} on $\oM_{g,n}$ whose fiber over a curve is the cotangent line to the curve at the $i$-th marked point, $\lambda_g \in H^{2g}(\oM_{g,n},\mbQ)$ is the top Chern class of the Hodge bundle~$\mbE$, and $\DR_g(a_1,\ldots,a_n) \in H^{2g}(\oM_{g,n},\mbQ)$ is the double ramification cycle, a cohomology class, polynomial of degree $2g$ in the variables $a_1,\ldots,a_n$, which represents a compactification by relative stable maps to $\CP^1$ of the locus of smooth curves whose marked points form the support of a principal divisor with multiplicities $a_1,\ldots,a_n$. More details on these classes and the DR hierarchy construction can be found in the original paper \cite{Buryak15} introducing the DR hierarchy and its F-CohFT counterpart \cite{ABLR21}.
 
\medskip

The DR hierarchy is integrable. It is straightforward from the above definition in terms of intersection numbers that the differential polynomials $P_d$ have the form $P_d=\frac{u^{d+1}}{(d+1)!}+O(\e^2)$. So the DR hierarchy is an integrable perturbation of the RH hierarchy of the form~\eqref{conservelawhierarchy}. Moreover, $P_1$ has the form~\cite[Theorem~4.4]{BR25}
$$
P_1=\frac{u^2}{2}+\sum_{g\geq 1}\e^{2g} \sum_{\lambda\in\mathcal{P}_{2g}\atop m_1(\lambda)=0} c_{\lambda}u_{\lambda},\quad c_\lambda\in\mbC,
$$
which implies that the DR hierarchy is in the normal form. By~\cite[Theorem~3]{ABLR21}, the DR hierarchy is endowed with conserved quantities $\og_d=\int g_d dx$, $d\ge 0$, given~by
$$
g_d:=\hspace{-1mm}\sum_{\substack{g,n\ge 0\\\od=(d_1,\ldots,d_n)\in\mbZ_{\ge 0}^n\\d_1+\ldots+d_n=2g}}\hspace{-1mm}\frac{\e^{2g}}{n!}\Coef_{a_1^{d_1}\cdots a_n^{d_n}}\Biggl(\int_{\oM_{g,n+1}}\hspace{-0.3cm}\psi_1^d\lambda_g\DR_g\lb-\sum a_i,a_1,\ldots,a_n\rb c_{g,n+1}\Biggr) u_{\od}.
$$
The differential polynomials $g_d$ have the form $g_d=\frac{u^{d+2}}{(d+2)!}+O(\e^2)$. 

\medskip

Consider now the F-CohFT $\{R.c^\triv_{g,n+1}\}$, where $R(z)=\exp(\sum_{i\ge 1}r_i z^i)$. In \cite[Theorem~4.4]{BR25}, Buryak--Rossi proved that the coefficients $c_{\lambda}$ are polynomials in $r_1,\dots, r_{g+\ell(\lambda)-2}$ with rational coefficients. For example, via a direct but tedious computation we obtain
\begin{align*}
P_1=&\frac{u^2}{2}+\eps^2\frac{u_{xx}}{12}+\eps^4\left(\frac{r_1}{60} u_{xxxx}+\frac{r_2}{48}u_{xx}^2\right)\\
&+\epsilon ^6 \biggl(\biggl(\frac{r_1^2}{210}+\frac{r_2}{1120}\biggr) u_6+\biggl(\frac{r_1^3}{60} + \frac{7 r_1 r_2}{240}+\frac{r_3}{288}\biggr) u_4 u_2 +\\
&\hspace{10mm}+\biggl(\frac{r_1^3}{60}+ \frac{13 r_1 r_2}{720}+\frac{r_3}{288}\biggr) u_3^2+\biggl(\frac{r_4}{756}+\frac{611}{60480}r_2^2+\frac{r_2r_1^2}{7560}\biggr)u_2^3\biggr)+O(\eps^8).
\end{align*}
In \cite[Theorem~4.4]{BR25}, Buryak--Rossi proved that for $g\ge 2$ the coefficients $c_{2g}, c_{(2^g)}, c_{(4,2^{g-2})}$ have the form
\begin{align*}
&c_{2g} = \frac{2g}{4^g}\frac1{(2g+1)!!} r_{g-1}
+\widetilde{T}_{2g}(r_1,\dots,r_{g-2}),\\
&c_{(2^g)} = (3g-1)(2g-1)\frac{|B_{2g}|}{(2g)!}r_{2g-2}
+\widetilde{T}_{(2^g)}(r_1,\dots,r_{2g-3}),\\
&c_{(4, 2^{g-2})} = \frac{(3g-2)|B_{2g}|}{8(2g-3)!} r_{2g-3}
+\widetilde{T}_{(4,2^{g-2})}(r_1,\dots,r_{2g-4}),
\end{align*}
where $\widetilde{T}_{2g},\widetilde{T}_{(2^g)},\widetilde{T}_{(4, 2^{g-2})}$ are polynomials with rational coefficients. This implies that the family of DR hierarchies corresponding to the F-CohFTs $\{R.c_{g,n+1}\}$ can be parameterized either by the constants $c_4,c_6,\ldots$ or by the constants $c_{(2^g)}$, $c_{(4,2^{g-2})}$, $g\ge 2$. In particular, this shows that the coefficients $c_2,c_4,\ldots$ in~\eqref{ALMnormalform} (assuming $a(w)=0$) can be arbitrary constants with $c_2\ne 0$ (one has to use rescaling of $\eps$, because the coefficient~$c_2$ is fixed in the DR hierarchy). As an example of the parameterization using the constants $a_g:=c_{(4,2^{g-2})}$ and $b_g:=c_{(2^g)}$, we have
\begin{align}
P_1=&\frac{u^2}{2}+\frac{\eps^2}{12} u_{xx}+\eps^4(a_2 u_{xxxx}+b_2 u_{xx}^2)\nn\\
&+\epsilon ^6 \Bigl(\Bigl(\thin\frac{120}{7}a_2^2+\frac{3}{70}b_2\Bigr) u_6+(a_3-32 a_2 b_2) u_3^2+b_3 u_2^3+a_3 u_4 u_2\Bigr)+O(\eps^{8}).\label{eq:P1 from a and b}
\end{align}

\medskip

If our F-CohFT is associated to a CohFT, then the DR hierarchy is Hamiltonian with Poisson operator $\d_x$ and Hamiltonians $\og_d$, i.e., $P_d=\frac{\delta\og_d}{\delta u}$. Moreover, it is a $\tau$-symmetric Hamiltonian perturbation of the RH hierarchy with $h_{d-1}=\frac{\delta \og_{d}}{\delta u}$ and $\oh_d=\og_d$, for $d\geq 0$, and it is also in the generalized standard form, as shown in~\cite{BDGR18},
$$
\og_1=\int\Big(\frac{u^3}{6}-\e^2\frac{u_x^2}{24}+\sum_{g\geq 2}\e^{2g}\sum_{\substack{\mu\in\mathcal{P}_{2g},\,l(\mu)\ge 2\\ \mu_1=\mu_2,\,m_1(\mu)=0}}\alpha_{\mu} u_{\mu}\Big)dx,\quad \alpha_\mu\in\mbC.
$$
For the CohFT~\eqref{sigmaCohFT}, in~\cite[Theorem~3.9]{BR25} Buryak--Rossi proved that the coefficients~$\alpha_{\lambda}$ are polynomials in $\sigma_1,\dots,\sigma_{2\lfloor\frac{g+\ell(\lambda)}{2}\rfloor-3}$ with rational coefficients. For example, we have
\begin{equation*}
\og_{1}=\int\biggl(\frac{u^3}{6}-\frac{\eps^2}{24}u_x^2+\frac{\eps^4}{120} r_1 u_{xx}^2+
\eps^6\Bigl(\Bigl(\thin\frac{r_1^3}{360} +\frac{r_3}{1728}\Bigr)u_{xx}^3-\frac{r_1^2}{420}u_{xxx}^2\Bigr)+O(\eps^8)\biggr)dx.
\end{equation*}
Moreover, it was proved in~\cite[Theorem~3.9]{BR25} (cf.~\cite{BDGR18, FP99}) that the coefficients $\alpha_{(2^g)}$, $g\ge 2$, 
have the more precise form 
\beq\label{trigorelation}
\alpha_{(2^g)}=(-1)^g\frac{(3g-2)|B_{2g-2}||B_{2g}|}{4g ((2g-2)!)^2}\sigma_{2g-3}+T_g(\sigma_1,\dots,\sigma_{2g-5}),\quad T_g\in \QQ[\sigma_1,\dots,\sigma_{2g-5}].
\eeq
This shows that the coefficients $a_0,\alpha_{(2^2)},\alpha_{(2^3)},\ldots$ in~\eqref{standardformhierarchy-2} can be arbitrary constants with $a_0\ne 0$ (one has to use rescaling of $\eps$, because the coefficient~$a_0$ is fixed in the DR hierarchy).

\medskip

One can associate to an F-CohFT the \emph{Dubrovin--Zhang (DZ) hierarchy}, which is also a perturbation of the RH hierarchy. 
If the F-CohFT comes from a CohFT, then the DZ hierarchy is a $\tau$-symmetric Hamiltonian perturbation of the RH hierarchy. It is called the \emph{Hodge hierarchy} of a point. The DR hierarchy and the DZ hierarchy are equivalent using a Miura-type transformation that is close to identity~\cite{BS24,BSS25}. The following conjecture was proposed in~\cite{DLYZ16} together with Conjecture~\ref{conjecture:DLYZ}.

\medskip

\noindent {\bf Hodge Universality Conjecture}~(\cite{DLYZ16}).
{\it Any $\tau$-symmetric Hamiltonian perturbation of the RH hierarchy with $a_0=1$ in the generalized standard form is equivalent to the Hodge hierarchy of a point by a normal Miura transformation.
}

\medskip  

\noindent Since the coefficients $a_0,\alpha_{(2^2)},\alpha_{(2^3)},\ldots$ in~\eqref{standardformhierarchy-2} can be arbitrary constants with $a_0\ne 0$, the Hodge Universality Conjecture follows from Conjecture~\ref{conjecture:DLYZ}.

\medskip

\subsection{Local DR hierarchy}
Consider an integrable perturbation of the RH hierarchy. Any infinitesimal symmetry of this perturbation of the form $\frac{\p u}{\p t}=f(u)u_x+O(\e)$, 
$f(u)\in\cS(u)$ and $f(u)\ne {\rm const}$, is uniquely determined by the leading term~\cite{LZ06}. We denote this flow (if exists) by $\frac{\p}{\p t_f}$. The following lemma generalizes the construction of {\it local Hodge hierarchy}~\cite[formula~(317)]{YZ2023}.

\medskip

\begin{lem} 
Consider the DR hierarchy corresponding to an F-CohFT $\{c_{g,n+1}\}$. The following statements hold:
\begin{enumerate}
\item [1)] The differential polynomials $P_d, g_d$ (giving the flows and the conserved quantities of the DR hierarchy) have the form
\begin{align*}
&P_d=\frac{u^{d+1}}{(d+1)!}+\sum_{g\ge 1}\e^{2g}\sum_{\lambda\in\cP_{2g}}\sum_{1\le k\le\min(g-1+l(\lambda),d)}p_{\lambda,k}\frac{u^{d-k}}{(d-k)!}u_\lambda,&& p_{\lambda,k}\in\mbC, \, d\ge0,\\
&g_d=\frac{u^{d+2}}{(d+2)!}+\sum_{g\ge 1}\e^{2g}\sum_{\lambda\in\cP_{2g}}\sum_{0\le k\le\min(g-2+l(\lambda),d)}q_{\lambda,k}\frac{u^{d-k}}{(d-k)!}u_\lambda,&& q_{\lambda,k}\in\mbC, \, d\ge 0.
\end{align*}

\smallskip

\item[2)] 
\begin{enumerate}
\item [a)] For any $f(u)\in\cS(u)$, there exists a unique infinitesimal symmetry of the DR hierarchy of the form $\frac{\p u}{\p t_f}=f(u)u_x+O(\e)$. Moreover, this infinitesimal symmetry is given by 
\begin{align}
& \frac{\d u}{\d t_f}=\d_x P_f,\quad\text{where} \label{localDRh}\\
& P_f:=F(u)+\sum_{g\ge 1}\e^{2g}\sum_{\lambda\in\cP_{2g}}\sum_{1\le k\le g-1+l(\lambda)}p_{\lambda,k}f^{(k)}(u)u_\lambda,\quad F'(u)=f(u).\notag
\end{align}

\smallskip

\item[b)] For any $h(u)\in\cS(u)$, there exists a unique conserved quantity of the DR hierarchy of the form $\int\lb h(u)+O(\eps)\rb dx$. Moreover, this conserved quantity is given by $\int g_h dx$, where
$$
g_h:=h(u)+\sum_{g\ge 1}\e^{2g}\sum_{\lambda\in\cP_{2g}}\sum_{0\le k\le g-2+l(\lambda)}q_{\lambda,k}h^{(k+2)}(u)u_\lambda.
$$

\smallskip

\item[c)] The local functional $\int g_h dx$ is a conserved quantity of~\eqref{localDRh}. 
\end{enumerate}
\end{enumerate}
\end{lem}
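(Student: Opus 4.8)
The plan is to concentrate all the intersection theory in Part~1, after which Parts~2a--2c follow formally from linearity together with the uniqueness and integrability statements already recorded. For Part~1, observe that the only $d$-dependence in the number defining $P_d$ enters through $\psi_2^d$ (through $\psi_1^d$ for $g_d$). I would extract the monomial $u^m u_\lambda$, where $m=\#\{i:d_i=0\}$ counts the input points of zero multiplicity and $\lambda\in\cP_{2g}$ records the positive multiplicities, by forgetting the $m$ zero-multiplicity points one at a time. First, taking $\Coef_{a_i^0}$ at such a point is the same as setting $a_i=0$, which by the pullback property $\DR_g(\dots,0)=\pi^*\DR_g(\dots)$ makes the DR class a pullback along the forgetful map $\pi$; together with $\pi^*\lambda_g=\lambda_g$ and the forgetful axiom $\pi^*c_{g,n+1}=c_{g,n+2}$ of the (F-)CohFT (applied to an input point, using permutation invariance), everything except $\psi_2^d$ becomes a pullback. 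The projection formula and $\pi_*\psi_2^d=\psi_2^{d-1}$ (from $\psi_2=\pi^*\psi_2+D$) then reduce both the number of points and the power of $\psi_2$ by one at each step. After removing all $m$ points I am left on $\oM_{g,l(\lambda)+2}$ with $\psi_2^{\,d-m}\lambda_g\DR_g(\cdots)c_{g,l(\lambda)+2}$; setting $k:=d-m$ and collecting the factor $1/n!$ against the multinomial count of orderings produces the coefficient of $\frac{u^{d-k}}{(d-k)!}u_\lambda$ as a reduced intersection number $p_{\lambda,k}$ that no longer depends on $d$. This is the reduction of \cite[formula~(317)]{YZ2023} transplanted to the F-CohFT setting. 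A dimension count on $\oM_{g,l(\lambda)+2}$ (of dimension $3g-1+l(\lambda)$), using that $\lambda_g$ and the extracted DR class each have complex degree $g$, forces $k\le g-1+l(\lambda)$; the bound $k\le d$ is automatic, and a vanishing of the $k=0$ integral for $g\ge1$ gives $k\ge1$. The argument for $g_d$ is identical with $\psi_1$ replacing $\psi_2$ and one marked point fewer, which moves the bound to $g-2+l(\lambda)$ and permits $k=0$.

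The formulas defining $P_f$ and $g_h$ are arranged so that $f(u)=u^d/d!$ yields $F(u)=u^{d+1}/(d+1)!$ and $f^{(k)}(u)=u^{d-k}/(d-k)!$ (with the convention $1/(d-k)!=0$ for $k>d$), so the $k$-sum truncates at $\min(g-1+l(\lambda),d)$ and $P_{u^d/d!}=P_d$; similarly $g_{u^{d+2}/(d+2)!}=g_d$. The key observation is that $P_f$ depends linearly on the jet $(f,f',f'',\dots)$ and $g_h$ linearly on $(h,h',\dots)$. Hence, writing an analytic $f=\sum_{d\ge0}\beta_d\,u^d/d!$ with $\beta_d=f^{(d)}(0)$, I obtain $P_f=\sum_d\beta_d P_d$ in $\hcA_u$ (at each order in $\e$ and each degree only finitely many $d$ contribute, so the rearrangement is legitimate). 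Existence in Part~2a then follows because the flows $\p_x P_d$ commute pairwise, so $\p_x P_f$ is an infinitesimal symmetry; the symmetry condition $D_{\p_x P_f}(P_1)=D_{\p_x P_1}(P_f)$ amounts, order by order in $\e$ and monomial by monomial, to a $\mbC$-linear relation among the values $f^{(k)}(u)$ at a single point, so it passes from the spanning family $\{u^d/d!\}$ to all analytic $f$. Uniqueness is the statement recorded after \eqref{HydrodynamicPDE}: for $f\ne\const$ the leading term $f(u)u_x$ is non-constant and hence determines the commuting flow. Part~2b is identical, using that the $\og_d=\int g_d\,dx$ are conserved quantities of the DR hierarchy, uniquely determined by their dispersionless densities.

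For Part~2c, integrability of the DR hierarchy (\cite[Theorem~3]{ABLR21}) gives $D_{\p_x P_{d'}}(g_d)\in\p_x\hcA_u$ for all $d,d'$. The expression $D_{\p_x P_f}(g_h)$ is bilinear in the values $f^{(k)}(u)$ and $h^{(l)}(u)$ and vanishes modulo $\p_x\hcA_u$ on the spanning pairs $\bigl(u^{d'}/d'!,\,u^{d+2}/(d+2)!\bigr)$, hence on all analytic $f,h$ by the same termwise argument; thus $\int g_h\,dx$ is conserved along \eqref{localDRh}. The entire weight of the proof sits in Part~1: making the $\psi$-reduction and the multinomial bookkeeping precise, verifying that the reduced numbers $p_{\lambda,k},q_{\lambda,k}$ are genuinely independent of $d$, and pinning down the ranges of $k$ (the upper bounds from the dimension count, the lower bound from the $k=0$ vanishing). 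Once Part~1 is in hand, all of Part~2 is a formal consequence of linearity plus the cited uniqueness and integrability results.
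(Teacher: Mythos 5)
Your proposal is correct in substance, but Part~1 takes a more hands-on route than the paper. Where you re-derive the $d$-independence of the coefficients directly from the geometry --- forgetting the zero-multiplicity points one at a time via $\DR_g(\ldots,0)=\pi^*\DR_g(\ldots)$, the projection formula, and $\pi_*\psi_2^a=\psi_2^{a-1}$ --- the paper simply invokes the relations $\frac{\p P_b}{\p u}=P_{b-1}$ (with $P_0=u$) and $\frac{\p g_b}{\p u}=g_{b-1}$ (with $g_{-1}:=u$), which encode exactly this forgetful-map reduction; iterating them down to the base case yields in one stroke both the shape $\frac{u^{d-k}}{(d-k)!}u_\lambda$ with $d$-independent coefficients \emph{and} the lower bound $k\ge 1$, since $P_0=u$ carries no $\e$-corrections. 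The upper bound $k\le g-1+l(\lambda)$ comes from the same dimension count in both arguments. Your version is more self-contained but leaves two small items dangling: the vanishing of the $k=0$ integral is asserted rather than proved (it holds because once $\psi_2$ is exhausted the second marked point also has zero multiplicity, so the entire integrand is a pullback from $\oM_{g,l(\lambda)+1}$ and integrates to zero; in the paper's scheme it falls out of the recursion's base case for free), and the multinomial bookkeeping matching $1/n!$ against $1/(d-k)!$ deserves one explicit line.

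In Part~2 the two arguments coincide: both reduce to $P_f$ (resp.\ $g_h$) being an infinite linear combination of the flows $P_d$ (resp.\ conserved densities $g_d$), with commutativity and conservation being linear (resp.\ bilinear) in the jets of $f$ and $h$. One caveat: you Taylor-expand $f$ at $u=0$, which is unavailable when the disk $B$ defining $\cS(u)=\mcO(B)$ does not contain the origin; the paper expands at the center $c$ of $B$ and uses the shifted flows $P_{d,c}=P_d|_{u\mapsto u-c}=\sum_{k}\frac{(-c)^k}{k!}P_{d-k}$. Your fallback argument --- that the symmetry and conservation conditions are, monomial by monomial, finite $\mbC$-linear relations among the values $f^{(k)}(u)$, $h^{(l)}(u)$ at a single point, hence propagate from the spanning family $u^d/d!$ to all of $\cS(u)$ --- does repair this; just note that your parenthetical claim that only finitely many $d$ contribute at each order fails for the dispersionless term, where the sum is a genuine power series converging to $F(u)$ (resp.\ $h(u)$) in $\mcO(B)$ rather than a finite sum.
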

\begin{proof}
{\em 1}. Since $\frac{\d P_b}{\d u}=P_{b-1}$, $b\ge 1$, and $P_0=u$, we have 
$$
P_d=\frac{u^{d+1}}{(d+1)!}+\sum_{g\ge 1}\e^{2g}\sum_{\lambda\in\cP_{2g}}\sum_{k=1}^d p_{\lambda,k}\frac{u^{d-k}}{(d-k)!}u_\lambda,\quad p_{\lambda,k}\in\mbC.
$$
By degree counting, the integral $\int_{\oM_{g,n+2}}\psi_2^k\lambda_g\DR_g\lb-\sum a_i,0,a_1,\ldots,a_n\rb c_{g,n+2}$ is zero unless $k\le g-1+n$. This implies that $p_{\lambda,k}$ vanishes unless $k\le g-1+l(\lambda)$. The claim for $g_d$ is proved by analogous arguments, using that $\frac{\d g_b}{\d u}=g_{b-1}$, $b\ge 0$, where $g_{-1}:=u$.

\medskip

{\em 2a)}. Let $c\in\mbC$ be the center of the open disk~$B\subset\mbC$ giving the ring $\cS(u)=\mcO(B)$, and consider the Taylor expansion of the function $f$, $f(u)=\sum_{d\ge 0}f_d\frac{(u-c)^d}{d!}$. Define $P_{d,c}:=P_d|_{u\mapsto u-c}$. Since $P_{d,c}=\sum_{k=0}^{d+1}\frac{(-c)^k}{k!}P_{d-k}$, where $P_{-1}:=1$, the flow $\frac{\p u}{\p t}=\p_x P_{d,c}$ is an infinitesimal symmetry of the DR hierarchy. Obviously, we have $P_f=\sum_{d\ge 0}f_d P_{d,c}$, and the flow $\frac{\p u}{\p t_f}=\p_x P_f$ is an infinitesimal symmetry of the DR hierarchy.

\medskip

{\em 2b)}. Similarly to 2a), write $h(u)=\sum_{d\ge 0} h_d\frac{(u-c)^d}{d!}$ and define $g_{d,c}:=g_d|_{u\mapsto u-c}$. Since $g_{d,c}=\sum_{k=0}^{d+2}\frac{(-c)^k}{k!}g_{d-k}$, where $g_{-2}:=1$, the local
functional $\int g_{d,c}dx$ is a conserved quantity of the DR hierarchy. The statement is proved by noticing that $g_h=\sum_{d\ge 0} h_d g_{d-2,c}$.

\medskip

{\em 2c)}. Notice that the flow $\frac{\p}{\p t_f}$ is an infinite linear combination of the flows $D_{\p_x(P_{d,c})}$.
\end{proof}

\medskip

The collection of the flows~\eqref{localDRh}, for all $f(u)\in\cS(u)$, is called the {\it local DR hierarchy}. The dispersionless limit of the local DR hierarchy is called the {\it local RH hierarchy} (cf.~also~\cite{YZ2023}).

\medskip

\section{Bihamiltonian $\tau$-symmetric hierarchies}\label{bihamiHodge}

In this section we prove a conjecture regarding bihamiltonian $\tau$-symmetry.

\medskip

\begin{Def}
A $\tau$-symmetric Hamiltonian perturbation of the RH hierarchy is called \emph{bihamiltonian} if the first flow has the form
$$
\frac{\d w}{\d t_1}=\cP_2\biggl(\frac{\delta\oh'}{\delta w}\biggr),\quad\oh'\in\hcF^{[0]}_w.
$$ 
where $(\cP_1=\cP,\cP_2)$ is a pair of compatible Poisson operators such that $\cP_2^{[0]}\ne\lambda \cP_1^{[0]}$ for any $\lambda\in\mbC$.
\end{Def}

\medskip

Consider a $\tau$-symmetric Hamiltonian perturbation of the RH hierarchy. We want to study when it is bihamiltonian. According to Lemma~\ref{standardformthm}, we can assume that it is in the generalized standard 
form~\eqref{standardformhierarchy-1}--\eqref{standardformhierarchy-2}. Suppose this perturbation is bihamiltonian. 

If $a_0 \neq 0$, then according to~\cite[Theorem~4.1]{Dubrovin06} we know that 
$$
\cP_1=\p_x,\quad \cP_2=\phi(w)\p_x+\frac12 \phi'(w)w_x+O(\e^2),\quad\phi(w)\ne\const,
$$
with 
\begin{gather}\label{eq:phi depending on alpha}
\phi(w)= 
\begin{cases}
	C_1 w+C_2,& \text{if $\alpha_{(2^2)}=0$},\\
	C_1 e^{-960 \alpha_{(2^2)}w/a_0^2}+C_2,& \text{if $\alpha_{(2^2)}\neq 0$},
\end{cases}
\end{gather}
where $C_1,C_2$ are constants and $C_1\neq 0$. Moreover, by \cite[Theorem~A.1]{Dubrovin06}, the Poisson operator $\cP_2$ up to $O(\eps^5)$ is uniquely determined by $a_0$ and $\phi(w)$. In particular, in the case $a_0\ne 0$, by a proper choice of the Poisson pencil, the metric $g(r)$ in the canonical coordinate and the central invariant $c(r)$ read~\cite{YZ2023}
\begin{gather*}
(g(r),c(r))=\Bigl(\,\frac{960 \alpha_{(2^2)}}{a_0}r +a_0,\frac1{24}\Bigr).
\end{gather*}

\medskip

The Hodge hierarchy associated to the CohFT $c_{g,n}^{\rm special}$ (see~\eqref{eq:CohFT for Volterra}, \eqref{specializationofHodge}) is called in~\cite{YZ2023} the {\it special-Hodge hierarchy}, which according to~\cite{DLYZ20, DLYZ16} is bihamiltonian. 
For the special-Hodge hierarchy, in the generalized standard form, we have $a_0=1$ and $\alpha_{(2^2)}=q/480$. A variant of the following theorem (see Theorem~\ref{bihamitausymm}$'$ below) was conjectured in~\cite{DLYZ16} (cf.~also~\cite{YZ2023}), and will be proved here using an argument from~\cite{YZ2023}. 

\medskip

\begin{thm}\label{bihamitausymm}
A $\tau$-symmetric Hamiltonian perturbation of the RH hierarchy, with $a_0=1$ in the generalized standard form, is bihamiltonian if and only if it is Miura equivalent to the special-Hodge hierarchy.
\end{thm}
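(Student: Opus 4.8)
The plan is to reduce the statement to the classification of bihamiltonian structures by their central invariant, so that both directions follow from results already available. The ``if'' direction is immediate: the special-Hodge hierarchy is bihamiltonian by~\cite{DLYZ20, DLYZ16}, and since Miura-type transformations preserve the classes of Hamiltonian PDEs of DZ type and of Poisson operators, every hierarchy Miura equivalent to it is again bihamiltonian.

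For the converse, suppose the perturbation is bihamiltonian in the generalized standard form with $a_0=1$. First I would invoke the structural results recorded above: by~\cite[Theorems~4.1 and~A.1]{Dubrovin06} the Poisson pencil can be taken to be $(\p_x,\,\phi(w)\p_x+\frac12\phi'(w)w_x+O(\e^2))$ with $\phi$ as in~\eqref{eq:phi depending on alpha}, and the argument from~\cite{YZ2023} then yields the dispersionless metric and central invariant $(g(r),c(r))=(960\alpha_{(2^2)}r+1,\frac1{24})$. The essential output is that bihamiltonicity forces the central invariant to be the constant $\frac1{24}$, while the slope of $g(r)$ records the single free parameter $\alpha_{(2^2)}$.

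Next I would pick out the matching member of the special-Hodge family. By~\eqref{trigorelation}, i.e.~\cite[Theorem~3.9]{BR25}, the special-Hodge hierarchy with parameter $q$ has $\alpha_{(2^2)}=q/480$, hence invariants $(2qr+1,\frac1{24})$; choosing $q:=480\alpha_{(2^2)}$ makes the two pairs $(g(r),c(r))$ coincide exactly. Because the central invariant is the constant $\frac1{24}$, the transformation rule~\eqref{eq:change of g and c} allows only affine rescalings of $r$, which preserve the slope of $g(r)$; thus this value of $q$ is in fact forced. With $(g(r),c(r))$ agreeing, the classification of~\cite{DLZ06, LZ05} furnishes a Miura-type transformation $\Phi$ taking the Poisson pencil of the special-Hodge hierarchy to that of our perturbation.

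Finally I would upgrade the equivalence of structures to an equivalence of hierarchies. Applying $\Phi$ to the special-Hodge hierarchy produces a hierarchy bihamiltonian with respect to the same pair $(\cP_1,\cP_2)$ as our perturbation and with the same dispersionless limit; in particular its first flow reduces at $\e=0$ to the first RH flow $u\,u_x$, as does that of our perturbation. By Corollary~\ref{corollary:from P1P2 to Q} the $t_1$-flow of a PDE bihamiltonian with respect to $(\cP_1,\cP_2)$ is determined by its dispersionless part and the pencil, so the two $t_1$-flows agree; since all flows are determined by the first flow~\cite{LZ06}, the hierarchies coincide, and our perturbation is Miura equivalent to the special-Hodge hierarchy. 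The main obstacle is the rigidity used above---that bihamiltonicity alone pins the central invariant to $\frac1{24}$ with $g(r)$ affine in $r$; this is precisely the borrowed computation of~\cite{YZ2023} resting on~\cite{Dubrovin06}, and once it is granted the matching and the reconstruction of the flows are formal.
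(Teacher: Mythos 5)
Your proposal is correct and follows essentially the same route as the paper's proof: fix $q$ and the second Poisson operator so that the pencils (equivalently, the pairs $(g(r),c(r))$) agree, invoke the classification of \cite{LZ05} to get a Miura-type transformation identifying the Poisson pairs, and then use Corollary~\ref{corollary:from P1P2 to Q} together with the fact that the first flow determines the hierarchy to conclude. The only cosmetic difference is that the paper phrases the matching as agreement of Hamiltonians and Poisson operators up to $O(\eps^5)$, while you match the invariants $(g(r),c(r))$ directly; these are equivalent.
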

\begin{proof}
It suffices to prove necessity. Consider our bihamiltonian $\tau$-symmetric perturbation of the RH hierarchy and the special-Hodge hierarchy, both in the generalized standard form. After an appropriate choice of the parameter $q$, the Hamiltonians of both hierarchies become the same at the approximation up to $O(\eps^5)$. Then one can choose an appropriate second Poisson operator for our $\tau$-symmetric perturbation 
(see formula~\eqref{eq:phi depending on alpha}) in such a way that the second Poisson operators of our hierarchies become the same at the approximation up to $O(\eps^5)$. In particular, the central invariants of the two pairs of compatible Poisson operators become the same. Therefore, there exists \cite{LZ05} a Miura-type transformation that is close to identity that relates the Poisson pairs of the two hierarchies. 
By Corollary~\ref{corollary:from P1P2 to Q}, this Miura-type transformation relates the flows $\frac{\d}{\d t_1}$ of the two hierarchies, and therefore the whole hierarchies as well.
\end{proof}

\medskip

We note that for the case when $a_0\neq 0$ we can rescale~$\e$ so that $a_0=1$. 

\medskip

If $a_0=0$, then from~\cite{Dubrovin06} we know that up to $O(\e^5)$ all the perturbation terms vanish. 
Let us show that for this case the generalized standard form is trivial. Indeed, the central invariant now is~0, so 
the corresponding Poisson pencil can~\cite{LZ05} be reduced to its dispersionless limit by a Miura-type transformation, which 
also reduces the generalized standard form to its dispersionless limit. Triviality then follows from Lemma~\ref{standardformthm}.

\medskip

By DR/DZ equivalence, Lemma~\ref{standardformmiura} and formula~\eqref{trigorelation}, the following theorem is equivalent to Theorem~\ref{bihamitausymm}.

\begin{manualtheorem}{\ref{bihamitausymm}$'$}
The Hodge hierarchy associated to a CohFT~$c_{g,n}$ is bihamiltonian if and only if $c_{g,n}=c_{g,n}^{\rm special}$ for some $q$.
\end{manualtheorem}

\medskip

\section{Bihamiltonian tests for the DR hierarchies for rank-$1$ F-CohFTs}\label{bihamitest}

In this section, we do a bihamiltonian test for the DR hierarchy corresponding to the F-CohFT 
$\{R(z).c_{g,n+1}^\triv\}$, $R(z)\in 1+z\mbC[[z]]$, with the first flow
\beq\label{BReq0}
\frac{\d u}{\d t_1}=\d_x P_1,\quad P_1=\frac{u^2}{2}+\eps^2\frac{u_{xx}}{12}+\sum_{g\geq 2}\e^{2g} \sum_{\lambda\in\mathcal{P}_{2g}\atop m_1(\lambda)=0} c_{\lambda}u_{\lambda},\quad c_\lambda\in\mbC.
\eeq
As we already mentioned in Section~\ref{subsection:CohFTs,FCohFTs,DRhierarchies}, this family of hierarchies can be parameterized using the constants $a_g:=c_{(4,2^{g-2})}$ and $b_g:=c_{(2^g)}$, $g\ge 2$, see for example~\eqref{eq:P1 from a and b}. 

\medskip

So suppose that the PDE~\eqref{BReq0} is bihamiltonian with respect to a pair of compatible Poisson operators $(\cP_1,\cP_2)$ such that 
\beq\label{eq:conditions for P1P2}
\cP_1^{[0]}\ne 0,\quad \cP_2^{[0]}\ne\lambda \cP_1^{[0]}\text{ for any $\lambda\in\mbC$}.
\eeq
We know that there exists a quasi-Miura transformation of the form~\eqref{eq:quasi-Miura} 
that reduces the pair $(\cP_1, \cP_2)$ and the flow $\frac{\d}{\d t_1}$ to the dispersionless part. For a Poisson operator of hydrodynamic type
$$
\cP(v) = \phi(v) \partial_x + \frac12 \phi'(v) v_x, \quad \phi(v) \ne 0,
$$
substituting the quasi-Miura transformation and requiring polynomiality for the coefficient of $\epsilon^4$, we get 
$$
768 b_2 \phi - 480 a_2 \phi' - \phi''=0.
$$
It follows that
$$
\phi(v)=p_1 \phi_1(v)+p_2 \phi_2(v),
$$
where $p_1, p_2$ are arbitrary constants and 
\beq\label{metric}
\phi_1(v)=e^{-(240 a_2+16\sqrt{\Delta})v},\quad
\phi_2(v)=\frac{1}{32 \sqrt{\Delta}}\Bigl(e^{-(240 a_2-16\sqrt{\Delta}) v}-e^{-(240 a_2+16\sqrt{\Delta}) v}\Bigr).
\eeq
Here 
$$
\Delta:=225a_2^2+3 b_2,
$$
and it is understood that when $\Delta=0$, $\phi_1(v)=e^{-240a_2 v}$ and $\phi_2(v)=v e^{-240a_2 v}$.

\medskip

Requiring for all $p_1,p_2$ the polynomiality for the coefficient of $\epsilon^6$, we get 
$$
a_3 = 1600 a_2^3 + 44 a_2 b_2, \quad b_3 = \frac{10240 a_2^2 b_2}{7} - \frac{128 b_2^2}{15}.
$$
Requiring  for all $p_1,p_2$ the polynomiality for the coefficient of $\epsilon^8$, we get 
$$
a_4 = \frac{2073600 a_2^3 b_2}{7}+\frac{5760 a_2 b_2^2}{7}, \quad b_4 = \frac{15488 b_2^3}{21}-\frac{101376 a_2^2 b_2^2}{7}.
$$
Requiring  for all $p_1,p_2$ the polynomiality for the coefficient of $\epsilon^{10}$, we get 
\begin{align}
&a_5=-\frac{40614912000 a_2^5 b_2}{77}-\frac{984821760}{77} a_2^3 b_2^2-\frac{5512832 a_2 b_2^3}{385}-\frac{5750784000000  a_2^7}{77},\nn\\
&b_5=-\frac{495452160000 a_2^6 b_2}{11}-\frac{5197824000}{11} a_2^4 b_2^2-\frac{376034652160}{11} a_2^2 b_2^3-\frac{26658304  b_2^4}{275}.\nn
\end{align}

\medskip

\begin{conj}\label{bihamDRconj}
For arbitrary $a_2, b_2\in\CC$, there exists a unique choice of $a_3, b_3, a_4, b_4$, $\dots$ such that equation~\eqref{BReq0} is bihamiltonian 
with $\cP_1, \cP_2$ satisfying the condition~\eqref{eq:conditions for P1P2}. Moreover, with this choice, $a_i$, $b_i$, $i\ge 3$, have a polynomial dependence on $a_2,b_2$.
\end{conj}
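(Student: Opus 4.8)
The plan is to treat the bihamiltonian test of~\eqref{BReq0} as an obstruction computation organized by the central invariant, and to separate the question of \emph{uniqueness} of the $a_g,b_g$ (the necessity of the relations) from that of \emph{existence} (that the forced values really make the flow bihamiltonian). By the quasi-triviality statement~\eqref{eq:quasi-Miura} of~\cite{DLZ06}, any admissible pair $(\cP_1,\cP_2)$ satisfying~\eqref{eq:conditions for P1P2} is reduced to its dispersionless part by a quasi-Miura transformation with coefficients in $\cA_u[u_x^{-1}]$, and bihamiltonianity of~\eqref{BReq0} is equivalent to polynomiality (absence of $u_x$-poles) of this transformation at every order in $\e$. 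The pole-cancellation already carried out at order $\e^4$ fixes the dispersionless metric $\phi$ to lie in $\mathrm{span}(\phi_1,\phi_2)$ of~\eqref{metric}, so the dispersionless pencil, the canonical coordinate $r$, and the metric $g(r)$ depend only on $a_2,b_2$, up to the reparametrization~\eqref{eq:change of g and c}. First I would upgrade this to the $\e^2$-jet of the dispersive pencil: by the reconstruction of the Poisson operator from the flow at low order~\cite[Theorem~A.1]{Dubrovin06} (the input also used in Section~\ref{bihamiHodge}), the coefficients $A_{2,3;i}$ that enter the central invariant are determined by $\phi$ and by the fixed $\e^2$-term $\tfrac1{12}u_{xx}$ of~\eqref{BReq0}; hence $c(r)$ is a definite function of $r$ depending only on $a_2,b_2$. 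One expects the two distinguished subfamilies to surface here as the loci where $c(r)$ is constant (the CohFT family~\eqref{eq:CohFT for Volterra}, $c\equiv\tfrac1{24}$) and affine (the CH family, $c=\tfrac r3$), which would be a useful consistency check.

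Granting this, uniqueness follows cleanly. Fix $a_2,b_2$ and suppose two DR flows with these values are bihamiltonian, with pencils $(\cP_1,\cP_2)$ and $(\cP_1',\cP_2')$. They share the same dispersionless pencil and the same central invariant, so by~\cite{LZ05,DLZ06} there is a Miura transformation $\Phi$ close to identity carrying one pencil to the other; by Corollary~\ref{corollary:from P1P2 to Q}, applied to $(\cP_1',\cP_2')$ and the common dispersionless limit $uu_x$, the second flow is then the image under $\Phi$ of the first. Since both flows are conservation laws in the normal form~\eqref{ALMnormalform} with $a(w)=0$, the transformation $\Phi$ must be of the special type~\eqref{conserve-preserve-miura}, under which the Miura invariants are preserved; as the Miura invariants of a DR flow are precisely the constants $c_\lambda$, this forces equality of all the $a_g,b_g$. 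I would carry out the $\e^2$-jet computation of the previous paragraph and the identification of $\Phi$ in detail; both steps are routine given the cited results, and they reproduce the explicit relations already found for $a_3,b_3,\dots,a_5,b_5$.

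For the polynomial dependence I would track the reconstruction of~\cite{LZ05} order by order: at each step the new coefficients of $\cP_2$ and of the quasi-Miura transformation solve an inhomogeneous linear system whose entries are built from $\phi_1,\phi_2$ in~\eqref{metric}, and hence involve $\sqrt{\Delta}$ and $1/\sqrt{\Delta}$ with $\Delta=225a_2^2+3b_2$. The forced $a_g,b_g$ are therefore a priori rational in $a_2,b_2$ with possible poles along $\Delta=0$, and the content of the polynomiality claim is that these poles, together with the apparent odd powers of $\sqrt{\Delta}$, all cancel, exactly as one sees in the closed forms for $a_3,b_3,\dots,a_5,b_5$, which are genuine polynomials. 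A workable route is to prove this cancellation structurally, for instance by checking that the linear systems are invariant under $\sqrt{\Delta}\mapsto-\sqrt{\Delta}$ (the interchange of the two exponentials in~\eqref{metric}), so that their solutions depend on $\phi_1,\phi_2$ only through combinations that are polynomial in $\Delta$.

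The genuine obstacle, and the reason the statement stays a conjecture, is \emph{existence}. The argument above shows only that \emph{if} the forced values $a_g=A_g(a_2,b_2)$, $b_g=B_g(a_2,b_2)$ are used then the pole-cancellation is solvable up to the order examined; it does not show that with these values all higher-order poles vanish too. Equivalently, the flow reconstructed abstractly from $(g(r),c(r))$ via~\cite{LZ05} need not have constant Miura invariants, i.e.\ need not be a DR flow, whereas the realization result of~\cite{BR25} produces DR flows only for \emph{constant} Miura invariants. A complete proof therefore seems to require either an all-orders vanishing statement for the bihamiltonian obstruction along the DR locus, refining the $O(\e^{10})$ verification, or a direct realizability argument producing, for each $(a_2,b_2)$, an F-CohFT whose DR flow is bihamiltonian. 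The two known one-parameter subfamilies establish this only on a pair of curves in the $(a_2,b_2)$-plane, and filling in the remaining two-parameter locus is precisely where I expect the real difficulty to lie.
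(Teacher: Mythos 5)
The statement you were given is a conjecture: the paper offers no proof, only the order-by-order obstruction computation (requiring polynomiality of the quasi-Miura reduction of a candidate hydrodynamic pencil at orders $\e^4$ through $\e^{10}$, which forces the displayed polynomial expressions for $a_3,b_3,\dots,a_5,b_5$) together with the explicit $g(r)$ and $c(r)$. Your proposal correctly reproduces this evidence-gathering mechanism and is honest that the existence half is genuinely open; that diagnosis matches the paper exactly. What you add beyond the paper is an abstract argument for the uniqueness half (same dispersionless pencil and central invariant, hence pencils related by a close-to-identity Miura transformation by \cite{LZ05}, hence flows related by Corollary~\ref{corollary:from P1P2 to Q}, hence equal Miura invariants and equal $R$-matrix) and the $\sqrt{\Delta}\mapsto-\sqrt{\Delta}$ parity heuristic for the polynomiality claim; neither appears in the paper, and both are reasonable.

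Two imprecisions are worth fixing. First, \cite[Theorem~A.1]{Dubrovin06} is stated for $\tau$-symmetric Hamiltonian perturbations in the generalized standard form with $\cP_1=\d_x$; the DR flow of a general F-CohFT of the form \eqref{BReq0} need not be Hamiltonian with respect to $\d_x$, so you cannot invoke that theorem directly to pin down the coefficients $A_{2,3;i}$. The paper instead obtains $c(r)$ from \cite[formula~(1.49)]{DLZ06} applied to the flow and the dispersionless pencil forced at order $\e^4$; the conclusion that $c(r)$ depends only on $(a_2,b_2)$ is the same, but the justification is different, and in your write-up this step would need to be argued for an arbitrary admissible pencil. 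Second, the Miura invariants are only the length-one coefficients $c_{(2g)}$, not all $c_\lambda$; to conclude equality of all $a_g,b_g$ from equality of Miura invariants you need the result of \cite{BR25} that the map from $R$-matrices to Miura invariants is a bijection, so that equal $c_{(2g)}$ forces equal $R$ and hence equal $c_\lambda$ for every $\lambda$. With these repairs your uniqueness argument appears to close (conditionally on the central-invariant claim above), while existence remains, as you correctly say, the real content of the conjecture; no one should mistake your proposal, or the paper's computations, for a proof of that part.
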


\medskip

\begin{rmk}\label{remark:degree of ag and bg}
By~\cite[Theorem~4.4]{BR25}, the coefficient $c_\lambda$ in~\eqref{BReq0} is a polynomial in $r_i$ of degree $g+l(\lambda)-2$, where $\deg r_i:=i$. In particular, the degree of $a_g$ is $2g-3$ and the degree of $b_g$ is $2g-2$. Therefore, the rescaling $r_i\mapsto \theta^i r_i$ corresponds to the rescaling $c_\lambda\mapsto\theta^{g+l(\lambda)-2}c_\lambda$. On the other hand, this rescaling corresponds to the following rescalings
$$
u\mapsto \theta u,\quad t_1\mapsto\theta^{-1}t_1,\quad\eps^2\mapsto\theta\eps^2,
$$
which obviously doesn't change for~\eqref{BReq0} the property of being bihamiltonian. Assuming the above conjecture is true, this implies that
\begin{itemize}
\item the bihamiltonian DR hierarchies corresponding to the pairs $(a_2,b_2)$ and $(\theta a_2,\theta^2 b_2)$, $\theta\in\mbC^*$, are essentially the same: they are related by rescalings of $u$, $\eps$, and the times;

\smallskip

\item for the bihamiltonian DR hierarchies, $a_g$ is a polynomial in $a_2,b_2$ of degree $2g-3$, and $b_g$ is a polynomial in $a_2,b_2$ of degree $2g-2$, where $\deg a_2:=1$ and $\deg b_2:=2$.  
\end{itemize}
\end{rmk}

\medskip

Let us now compute the central invariant for the above conjectural $2$-parameter family of bihamiltonian structures. With the compatible flat metrics $\phi_1(v), \phi_2(v)$ given in~\eqref{metric}, the canonical coordinate $r$ is given by
$$
r= \frac{\phi_2(v)}{\phi_1(v)} = \frac{1}{32 \sqrt{\Delta}}\bigl(e^{32 \sqrt{\Delta} v}-1\bigr),
$$
with $r=v$ when $\Delta=0$. In the canonical coordinate the flat pencil reads
$$
r g(r) + \lambda g(r),
$$
where 
$$
g(r) = \bigl(1+32 \sqrt{\Delta} \, r\bigr)^{\frac32-\frac{15 a_2}{2 \sqrt{\Delta}}}.
$$
Here it is understood that when $\Delta=0$, $g(r)=e^{-240 a_2 r}$. By using~\cite[formula (1.49)]{DLZ06}, we find the central invariant
$$
c(r) = \frac1{24} \bigl(1+32\sqrt{\Delta} \, r\bigr)^{-\frac12+\frac{15 a_2}{2 \sqrt{\Delta}}}.
$$	
Here when $\Delta=0$ it is understood that $c(r)=\frac1{24} e^{240 a_2 r}$.

\medskip

We observe that $g(r), c(r)$ satisfy the relation
$$
g(r) c(r) = \frac1{24}+\frac{4\sqrt{\Delta}}3 r.
$$

\medskip

\begin{rmk}
For the mapping WK hierarchy~\cite{YZ2023}, $g(r)$ can be arbitrary and $c(r)=\frac1{24}$, under an appropriate choice of a Poisson pencil. For the bihamiltonian DR hierarchy associated to a rank-$1$ F-CohFT, the product $g(r)c(r)$, under an appropriate choice of a Poisson pencil, is an affine-linear function. 
\end{rmk}

\medskip

Let us look at four examples: 

\medskip

\underline{Example~1}: $b_2=0$, $a_2\neq 0$. In this case, we have $\Delta=225 a_2^2$, 
$$
g(r)=480 a_2  r +1, \quad c(r)=\frac1{24}.
$$
It is known from the Hodge--GUE correspondence that a particular flow in the local DR hierarchy is Miura equivalent to the Volterra Lattice equation.

\medskip

\underline{Example~2}: $b_2 = \frac{-200 a_2^2}{3}$, $a_2\neq 0$. In this case, we have $\Delta= 25 a_2^2$,  
$$
g(r)=1, \quad c(r)=\frac{20 a_2}{3} r + \frac1{24}.
$$
Assuming Conjecture~\ref{bihamDRconj} is true, one can find a particular flow in the corresponding local DR hierarchy that is Miura equivalent to the CH equation, noticing that the latter is bihamiltonian and has the above $g(r), c(r)$ (after appropriate rescalings and choice of a Poisson pencil, see Remark~\ref{remark:degree of ag and bg} and formula~\eqref{eq:change of g and c}). Let us give the Miura equivalence more directly at the approximation up to $O(\epsilon^8)$. Taking into account Remark~\ref{remark:degree of ag and bg}, we can assume $a_2=\frac1{20}$ for simplicity. Consider 
$$
f(u) = e^{8 u}
$$
in~\eqref{localDRh}, and consider the Miura-type transformation 
$$
R =e^{32\sqrt{\Delta} u} = e^{8 u}.
$$
Then we have 
\begin{align}
\frac{\partial R}{\partial t_f}& = R R_1 + \e^2\biggl(
\frac{2 R}{3}R_3+\frac{1}{3 R}R_1^3-\frac{2}{3}R_1 R_2 \biggr)\nn\\
&+\e^4\biggl(\frac{2 R R_5}{3}+\frac{47 R_1^5}{10 R^3}-\frac{602 R_2 R_1^3}{45 R^2}+\frac{44 R_3 R_1^2}{9 R}+\frac{149 R_2^2 R_1}{18 R}-\frac{4 R_4 R_1}{3}-\frac{11 R_2 R_3}{3}\biggr) \nn\\
& + \e^6\biggl(
\frac{78137 R_1^7}{567 R^5}-\frac{54562 R_2 R_1^5}{105 R^4}+\frac{24752 R_3 R_1^4}{135 R^3}+\frac{77426 R_2^2 R_1^3}{135 R^3}-\frac{15482 R_4 R_1^3}{315 R^2}\nn\\
&-\frac{10972 R_2 R_3 R_1^2}{35 R^2}-\frac{155992 R_2^3 R_1}{945 R^2}+\frac{97 R_5 R_1^2}{9 R}+\frac{719 R_3^2 R_1}{18 R}+\frac{5359 R_2 R_4 R_1}{90 R} \nn\\
&-2 R_6 R_1+\frac{20354 R_2^2 R_3}{315 R}+\frac{2 R R_7}{3}-\frac{139 R_2 R_5}{18}-\frac{1177 R_3 R_4}{90}
\biggr) + O(\e^8). \nn
\end{align}
Further performing the Miura-type transformation
\begin{align}
v = & R + \epsilon^2\Bigl(\frac32 R_2 -\frac{1}{3R} R_1^2\Bigr) 
+\e^4\Bigl(\frac{15 R_4}{8}-\frac{191 R_1^4}{90 R^3}+\frac{431 R_2 R_1^2}{90 R^2}-\frac{11 R_3 R_1}{6 R}-\frac{23 R_2^2}{15 R}\Bigr)\nn\\
&+\e^6\Bigl(\frac{35 R_6}{16}-\frac{136748 R_1^6}{2835 R^5}+\frac{595381 R_2 R_1^4}{3780 R^4}-\frac{9161 R_3 R_1^3}{162 R^3}-\frac{100675 R_2^2 R_1^2}{756 R^3}+\frac{38327 R_4 R_1^2}{2520 R^2}\nn\\
&+\frac{1895 R_2 R_3 R_1}{28 R^2}+\frac{161 R_2^3}{9 R^2}-\frac{10 R_5 R_1}{3 R}-\frac{737 R_2 R_4}{70 R}-\frac{1571 R_3^2}{210 R}\Bigr)+O(\e^8),\nn
\end{align}
we verified at the approximation up to~$O(\epsilon^8)$ that it becomes the CH equation~\eqref{CHequationform2}. Let us write our conjectural statement for this particular example more explicitly. 
\begin{conj}
For the case when $b_2 = \frac{-200 a_2^2}{3}\neq 0$, there exists a unique choice of $a_3, b_3, a_4, b_4, \dots$ such that a particular flow of the corresponding local DR hierarchy after a Miura-type transformation 
(and appropriate rescalings, see Remark~\ref{remark:degree of ag and bg}) coincides with the CH equation~\eqref{CHequationform2}.
\end{conj}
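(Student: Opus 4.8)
The plan is to derive the statement from Conjecture~\ref{bihamDRconj} together with the fact that, for bihamiltonian structures of the type considered here, the central invariant is a complete invariant modulo close-to-identity Miura transformations~\cite{DLZ06, LZ05}. Assuming Conjecture~\ref{bihamDRconj}, for the parameters $b_2=-\tfrac{200}{3}a_2^2\ne 0$ there is a unique choice of $a_3,b_3,a_4,b_4,\dots$ for which~\eqref{BReq0} is bihamiltonian with respect to a pair $(\cP_1,\cP_2)$ satisfying~\eqref{eq:conditions for P1P2}; its central invariant was computed above to be $(g(r),c(r))=(1,\tfrac{20a_2}{3}r+\tfrac1{24})$. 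On the other hand, the CH hierarchy is bihamiltonian with central invariant $(1,r/3)$. The idea is to match these two data and then transport the CH equation into the local DR hierarchy.

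First I would put the two central invariants into a common normal form. Using the rescaling freedom of Remark~\ref{remark:degree of ag and bg}, I set $a_2=\tfrac1{20}$, so that the DR central invariant becomes $(1,\tfrac r3+\tfrac1{24})$. Then I apply the transformation rule~\eqref{eq:change of g and c} with $c=0$, i.e.\ an affine change of Poisson pencil $\tr=(ar+b)/d$; a short computation shows that the choice $d=a^2$, $b=a/8$ sends $(1,\tfrac r3+\tfrac1{24})$ to $(1,\tfrac{\tr}{3})$ while preserving $g\equiv 1$, which is precisely the CH central invariant. Thus, after the rescaling and an appropriate choice of the Poisson pencil, the DR and CH bihamiltonian structures have the same dispersionless limit (the flat pencil $r+\lambda$ in the canonical coordinate) and the same central invariant.

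By~\cite{LZ05}, two bihamiltonian structures sharing the same dispersionless limit and the same central invariant are related by a Miura-type transformation close to the identity; let $\Phi$ be such a transformation carrying the CH Poisson pencil to the (rescaled, repencilled) DR pencil. Transporting the CH equation~\eqref{CHequationform2} through $\Phi$ produces a PDE that is bihamiltonian with respect to $(\cP_1,\cP_2)$ and whose dispersionless limit is of the form $f(u)u_x$; tracking the leading term through the passage to the canonical coordinate identifies $f$ explicitly, and in the normalization $a_2=\tfrac1{20}$ one finds $f(u)=e^{8u}$, in agreement with the explicit $O(\epsilon^8)$ computation preceding the statement. Since the first flow $\partial_x P_1$ is bihamiltonian and, by~\cite{LZ06}, the DR hierarchy is the unique integrable extension of this flow, it coincides with the bihamiltonian hierarchy generated by $(\cP_1,\cP_2)$; hence all $\partial_x P_d$ are bihamiltonian, and so is every flow of the local DR hierarchy~\eqref{localDRh}, being an infinite linear combination of the $\partial_x P_d$. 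In particular $\tfrac{\partial}{\partial t_f}$ is bihamiltonian and shares the leading term $f(u)u_x$ with the transported CH equation, so Corollary~\ref{corollary:from P1P2 to Q} forces the two PDEs to coincide. This gives the existence part of the statement, while the uniqueness of $a_3,b_3,\dots$ is exactly the uniqueness asserted in Conjecture~\ref{bihamDRconj}.

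The main obstacle is that the whole argument is \emph{conditional} on Conjecture~\ref{bihamDRconj}: it is this conjecture that both produces a genuine all-order bihamiltonian structure for the parameters of Example~2 and pins down the higher coefficients uniquely, neither of which is available beyond the orders verified in Section~\ref{bihamitest}. This is precisely why the assertion is stated as a conjecture rather than a theorem. A secondary point requiring care is the identification of the distinguished flow: one must verify that the local DR hierarchy actually contains the transported CH flow rather than merely a hierarchy sharing its central invariant. This is settled by Corollary~\ref{corollary:from P1P2 to Q} once the leading terms are matched, but it relies on carefully following the Miura transformation to the CH variable, as illustrated by the explicit computation given before the statement.
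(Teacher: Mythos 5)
The statement is a conjecture, and the paper offers no proof of it --- only the conditional reduction to Conjecture~\ref{bihamDRconj} sketched in Example~2 (matching $(g(r),c(r))$ with the CH pair $(1,r/3)$ after rescaling and a change of Poisson pencil via~\eqref{eq:change of g and c}, then invoking~\cite{LZ05} and Corollary~\ref{corollary:from P1P2 to Q}) together with an explicit verification up to $O(\eps^8)$; your proposal spells out exactly this argument, your pencil-change computation ($d=a^2$, $b=a/8$) is correct, and you rightly flag that everything remains conditional on Conjecture~\ref{bihamDRconj}. The one loose end is the uniqueness claim: deducing it from the uniqueness in Conjecture~\ref{bihamDRconj} additionally requires showing that \emph{any} choice of $a_3,b_3,\dots$ whose local DR hierarchy contains a CH-equivalent flow must make the first flow bihamiltonian with respect to a pencil satisfying~\eqref{eq:conditions for P1P2} --- plausible, but not addressed in your write-up (nor in the paper, which is precisely why the statement is left as a conjecture).
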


\medskip

\underline{Example~3}: $b_2\neq0$, $a_2=0$. In this case, we have $\Delta=3 b_2$,  
$$
g(r) = (1+ 32\sqrt{3b_2} \, r)^{\frac32},\quad 
c(r) = \frac1{24} (1+32\sqrt{3b_2} \, r)^{-\frac12}.
$$

\medskip

\underline{Example~4}: $\Delta=0$, or equivalently $b_2=-75 a_2^2$. In this case, we have
$$
g(r)=e^{-240a_2 r},\quad c(r)=\frac1{24} e^{240a_2 r}.
$$

\medskip

Figure~\ref{parabola} gives a summary of relations of $a_2,b_2$ in the above four examples. 
\begin{figure}[htbp]
\centering
\includegraphics[scale=0.8]{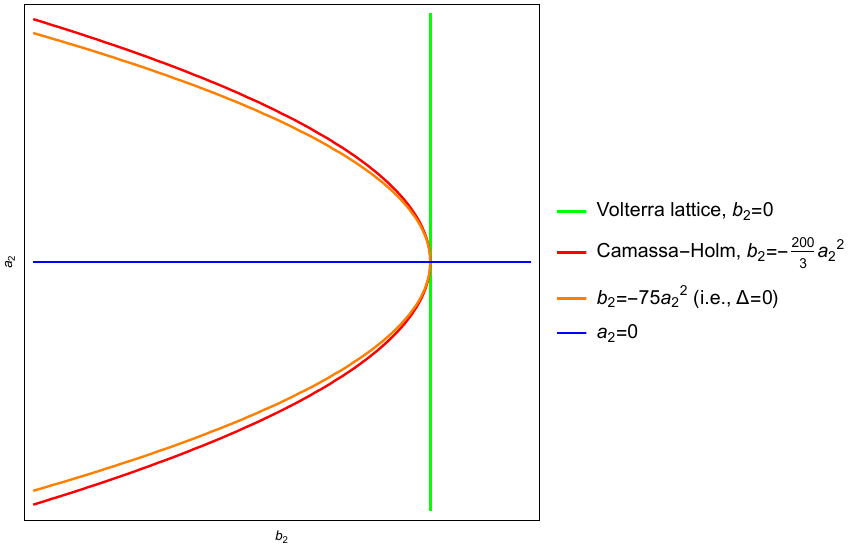}
\caption{Relations in the $2$-parameter family $(a_2,b_2)$ for four examples.}
\label{parabola}
\end{figure}

\medskip

\section{More on Miura invariants of the DR hierarchy}\label{Miura-invariants-transf}

Consider the DR hierarchy corresponding to an F-CohFT $\{c_{g,n+1}\}$. We know that the Miura invariants of the DR hierarchy are constants. Our goal is to give a systematic study of the fact that the local DR hierarchy after a Miura-type transformation could coincide with an integrable perturbation of the local RH hierarchy with the ALM normal form having not necessarily constant Miura invariants.

\medskip

Consider arbitrary functions $f(u),h(u)\in\cS(u)$. Since $\int g_f dx$ is a conserved quantity of the flow $\frac{\d}{\d t_h}$, we have $\frac{\d g_f}{\d t_h}=\d_x P_{f,h}$ for some differential polynomial $P_{f,h}$. Suppose $\frac{\d f}{\d u}\ne 0$. Then $u\mapsto v=g_f(u,u_1,u_2,\ldots,\eps)$ 
is a Miura-type transformation, and one can immediately see that the flows $\frac{\d}{\d t_{h_d}}$, where $h_d(u):=\frac{f(u)^d}{d!}$, $d\ge 0$, have the following form in the coordinate $v$:
\begin{gather}\label{eq:Miura transformation of DR hierarchy}
\frac{\d v}{\d t_{h_d}}=\d_x\lb\frac{v^{d+1}}{(d+1)!}+O(\eps)\rb,\quad h_d(u)=\frac{f(u)^d}{d!},\quad d\ge 0.
\end{gather}
Therefore, we again obtain an integrable perturbation of the RH hierarchy of the form~\eqref{conservelawhierarchy}. The next theorem describes the Miura invariants of the resulting perturbation of the RH hierarchy.

\medskip

\begin{thm}\label{theorem:new Miura invariants}
Consider an arbitrary F-CohFT $\{c_{g,n+1}\}$, a function $f(u)\in\cS(u)$ satisfying $\frac{\d f}{\d u}\ne 0$, and the Miura-type transformation $u\mapsto v=g_f(u,u_1,u_2,\ldots,\eps)$. Let us denote by $c_2=\frac{1}{12}, c_4,c_6,\ldots$ the Miura invariants of the DR hierarchy and by $C_{2g}(v)$ the Miura invariants of the perturbation~\eqref{eq:Miura transformation of DR hierarchy}. Then $C_{2g}(v)=b_{2g}(f^{-1}(v))$, where the functions $b_{2g}(u)\in\cS(u)$ are determined by the recursive relation
\begin{gather}\label{eq:recursion for b}
b_{2g}(u)=c_{2g} f'(u)+\sum_{k=1}^{g-1}\frac{2k+1}{2g+1}c_{2k} b_{2g-2k}'(u),\quad g\ge 1,
\end{gather}
or, alternatively, are explicitly given by
\begin{gather}\label{eq:explicit formula for b}
b_{2g}(u)=\sum_{k=1}^g\sum_{\substack{g_1,\ldots,g_k\ge 1\\g_1+\ldots+g_k=g}}\frac{2g_1+1}{2g+1}\frac{2g_2+1}{2(g-g_1)+1}\cdots\frac{2g_{k-1}+1}{2(g_{k-1}+g_k)+1}f^{(k)}(u)\prod_{i=1}^k c_{2g_i},
\end{gather}
or by
\beq\label{genbc}
\p_z (z c(z)) = \frac{\p_z (z b(u,z))}{\p_u (f(u)+b(u,z))},
\eeq
where $c(z):=\sum_{g\ge 1}c_{2g}z^{2g}$, $b(u,z):=\sum_{g\ge1} b_{2g}(u)z^{2g}$.
\end{thm}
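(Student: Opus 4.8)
The plan is to reduce the computation of the Miura invariants to the linearized dispersion law (the ``symbol'') of the flows, and then to extract the recursion from the integrability of the DR hierarchy.

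First I would use that the Miura invariants are, by definition, the coefficients of the single-derivative terms $w_{2g}$ in the first-flow density and are invariant under the transformations $w\mapsto w+\e\p_x(\cdots)$; hence they are read off from the linearization of the first flow around a constant background. For the flow $\frac{\p v}{\p t_{h_1}}$ with $h_1=f$ (which is the flow $\frac{\p}{\p t_f}$ of the local DR hierarchy, with density $P_f$ in the $u$-coordinate), I introduce the symbol $\widehat P_f(u,z)$ obtained by linearizing $P_f$ at $u=\mathrm{const}$ and passing to Fourier, with $z:=i\e k$. By the Lemma only the terms with $l(\lambda)=1$ survive the linearization, so $\widehat P_f(u,z)=f(u)+b(u,z)$ with $b(u,z)=\sum_{g\ge1}z^{2g}\sum_{k=1}^{g}p_{(2g),k}f^{(k)}(u)$. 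Since $u\mapsto v=g_f$ is local, close to identity, and satisfies $v|_{\e=0}=f(u)$, at the linear level it only rescales the amplitude at fixed frequency and relabels the background $u_0\mapsto v_0=f(u_0)$; thus the dispersion law is unchanged and the symbol of $\frac{\p v}{\p t_{h_1}}$ equals $\widehat P_f$ at $u=f^{-1}(v)$. This already yields $C_{2g}(v)=b_{2g}(f^{-1}(v))$ with $b_{2g}(u)=\sum_{k=1}^g p_{(2g),k}f^{(k)}(u)$, and it remains to prove these $b_{2g}$ satisfy the recursion.

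Next I would package the symbols of all DR flows into $\widehat P(u,z,\mu):=\sum_{d\ge0}\mu^d\widehat P_d(u,z)$. Using the explicit form of $P_d$ together with $\frac{\p P_d}{\p u}=P_{d-1}$, a direct summation gives $\widehat P(u,z,\mu)=e^{\mu u}\Pi(\mu,z)$, where $\Pi(\mu,z)=\sum_{k\ge0}\pi_k(z)\mu^k$ with $\pi_0=1$, $\pi_1=c:=\sum_{g\ge1}c_{2g}z^{2g}$, and $\pi_k(z)=\sum_{g\ge k}p_{(2g),k}z^{2g}$; equivalently $\widehat P_f=\Pi(\p_u,z)[f]$. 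A short manipulation shows that, given $b_{2g}=\sum_k p_{(2g),k}f^{(k)}$, the three formulations in the theorem are all equivalent to the single recursion $\p_z(z\pi_k)=\p_z(zc)\,\pi_{k-1}$, i.e.\ to the ODE $\p_z(z\Pi)=1+\mu\,\p_z(zc)\,\Pi$ with $\Pi(\mu,0)=1$. So the whole theorem reduces to establishing this ODE for the symbol generating series of the DR hierarchy.

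This ODE must come from the commutativity of the flows, and this is the crux. I would expand $D_{\p_x P_a}(\p_x P_b)=D_{\p_x P_b}(\p_x P_a)$ around the constant state (a common fixed point of all flows) to second order in the perturbation. Linear order only reproduces the automatically commuting symbols, so the content is quadratic: projecting onto the $e^{i(k_1+k_2)x}$ mode produces the vertex $V_a(k_1,k_2)$ and the frequency $\omega_a(k)=(ik)\widehat P_a$, and commutativity becomes $V_b\,\Delta_a=V_a\,\Delta_b$ with $\Delta_a:=\omega_a(k_{12})-\omega_a(k_1)-\omega_a(k_2)$, $k_{12}:=k_1+k_2$, so $V_a/\Delta_a$ is flow-independent. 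I then take the soft limit $k_2\to0$: the relation $\frac{\p P_a}{\p u}=P_{a-1}$ forces $V_a(k_1,0)=(ik_1)\widehat P_{a-1}(k_1)$, while Taylor expansion gives $\Delta_a=(ik_2)\big(\p_z(z\widehat P_a)-\tfrac{u^a}{a!}\big)+O(k_2^2)$. Hence $\widehat P_{a-1}\big/\big(\p_z(z\widehat P_a)-\widehat P_a(u,0)\big)$ is flow-independent; normalizing on $a=1$ (where $\widehat P_1=u+c$, $\widehat P_0=1$) fixes it to $1/\p_z(zc)$, giving the universal relation $\p_z(z\widehat P_a)-\widehat P_a(u,0)=\p_z(zc)\,\widehat P_{a-1}$ for all $a\ge1$. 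Multiplying by $\mu^a$ and summing collapses this to the desired ODE, which together with the previous two steps proves $C_{2g}(v)=b_{2g}(f^{-1}(v))$ and all three stated formulas.

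The hard part will be the clean bookkeeping of the quadratic-order commutativity in this last step: one must check that expanding the commutator to $O(\phi^2)$ around the constant state yields exactly the vertex identity $V_b\Delta_a=V_a\Delta_b$ with no spurious terms, and that the soft vertex $V_a(k_1,0)$ is governed solely by $\frac{\p P_a}{\p u}=P_{a-1}$. Everything else is either structural (the reduction of Miura invariants to the symbol and the form of $\widehat P_f$) or a formal series manipulation (the passage to $e^{\mu u}\Pi$ and from the universal relation to the ODE); I expect the equivalence of the recursion, the explicit formula, and the generating-function identity~\eqref{genbc} to be immediate once the relation $\p_z(z\pi_k)=\p_z(zc)\,\pi_{k-1}$ is in hand.
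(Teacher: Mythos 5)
Your proposal is correct, and its engine is the same as the paper's --- commutativity of flows, restricted to the part of the commutator that is bilinear in the perturbation with one ``soft'' factor --- but the packaging and the pair of flows used are genuinely different. The paper works directly with the commutator $\p_{t_f}\p_{t_1}(u)-\p_{t_1}\p_{t_f}(u)=\p_x P$ for the two flows $\p_{t_1}$ and $\p_{t_f}$, observes that the coefficient of $\e^{2g}u_{2g+1}$ in $P$ vanishes identically, and reads the recursion~\eqref{eq:recursion for b} off the coefficient of $\e^{2g}u_1u_{2g}$; your $O(k_2)$ soft limit of the vertex identity $V_a\Delta_b=V_b\Delta_a$ is precisely this $u_1u_{2g}$ coefficient in Fourier clothing, applied instead to the polynomial DR flows $P_a,P_b$, from which you deduce the universal symbol relation $\p_z(z\widehat{P}_a)-\widehat{P}_a(u,0)=\p_z(zc)\,\widehat{P}_{a-1}$, equivalently $\p_z(z\pi_k)=\p_z(zc)\,\pi_{k-1}$, and then transfer to $P_f$ via $b_{2g}=\sum_k p_{(2g),k}f^{(k)}$ (which is exactly formula~\eqref{localDRh} of the paper's Lemma, so this step is available). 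Your route costs an extra layer (the generating series $\Pi(\mu,z)$ and the transfer step) but buys two things: it isolates a clean intrinsic statement about the dispersion symbols of all DR flows, and it supplies an actual argument for the identity $C_{2g}(v)=b_{2g}(f^{-1}(v))$ --- invariance of the dispersion relation under scalar Miura transformations up to relabelling the constant background by $v_0=f(u_0)$ --- which the paper asserts without comment. The bookkeeping you flag as the hard part does close: every genuinely bilinear term $u_iu_j$ with $i,j\ge 1$ carries a factor $(ik_2)^{\min(i,j)}$ and drops out of $V_a(k_1,0)$, so the soft vertex is indeed governed solely by $\p_u P_a=P_{a-1}$, and since $\Delta_a=O(k_2)$ the order-$k_2$ part of the vertex identity involves only $V_a(k_1,0)$, not $\p_{k_2}V_a$. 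The remaining equivalences (recursion, explicit formula, generating function) are handled the same way in both arguments.
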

\begin{proof}
We have
\begin{align*}
&\frac{\p u}{\p t_1}=\d_x\biggl(\frac{u^2}{2}+\sum_{g\geq 1}\e^{2g} \sum_{\lambda\in\mathcal{P}_{2g}\atop m_1(\lambda)=0} c_{\lambda}u_{\lambda}\biggr),&& c_\lambda\in\mbC,\\
&\frac{\p u}{\p t_f}=\p_x
\biggl(F(u)+ \sum_{g\geq 1}\e^{2g}\sum_{\lambda\in\mathcal{P}_{2g}}b_{\lambda}(u)u_{\lambda}\biggr),&& b_\lambda(u)\in\cS(u),
\end{align*}
where $F'(u)=f(u)$. Then $C_{2g}(v)=b_{2g}(f^{-1}(v))$. We know~\cite{LZ06} that the relation $\p_{t_f}\p_{t_1}(u)-\p_{t_1} \p_{t_f}(u)=0$ uniquely determines all the functions $b_\lambda(u)$ in terms of the coefficients $c_\lambda$ and the function $f(u)$. Considering $c_\lambda$ as arbitrary constants and $f(u)$ as an arbitrary function, we have $\p_{t_f}\p_{t_1}(u)-\p_{t_1} \p_{t_f}(u)=\d_x P$ for some differential polynomial $P$. One can easily check that the coefficient of $\e^{2g}u_{2g+1}$, $g\ge 0$, in $P$ is equal to zero, and therefore this does not give any constraints for $c_\lambda$ and $f$. The coefficient of~$\e^{2g}u_1 u_{2g}$ in $P$ is nontrivial, and equating it to zero gives the recursive relation~\eqref{eq:recursion for b}. The explicit formula~\eqref{eq:explicit formula for b} can be easily derived from it using the induction on~$g$. Formula~\eqref{genbc} follows from~\eqref{eq:recursion for b}.
\end{proof}

\medskip

\begin{cor}\label{corlinear}
Under the assumptions of Theorem~\ref{theorem:new Miura invariants}, let us take $f(u)=e^{\alpha u}$, $\alpha\in\mbC^*$. Then $C_{2g}(v)$ are all linear functions, $C_{2g}(v)=\beta_{2g}v$, where the coefficients $\beta_{2g}\in\mbC$ are determined by the recursion
\begin{gather}\label{eq:recursion for beta}
\beta_{2g}=\alpha\biggl( c_{2g}+\sum_{k=1}^{g-1}\frac{2k+1}{2g+1}c_{2k} \beta_{2g-2k}\biggr),\quad g\ge 1.
\end{gather}
Moreover, the generating series $c(z)$ and $\beta(z):=\sum_{g\ge 1}\beta_{2g}z^{2g}$ are related by
\begin{gather}\label{eq:c(z) and beta(z)}
\d_z(z c(z))=\frac{\d_z(z\beta(z))}{\alpha(1+\beta(z))}.
\end{gather}
\end{cor}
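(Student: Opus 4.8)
The plan is to specialize the recursion~\eqref{eq:recursion for b} from Theorem~\ref{theorem:new Miura invariants} to the case $f(u)=e^{\alpha u}$ and to exploit the fact that $e^{\alpha u}$ is an eigenfunction of $\frac{d}{du}$. The crucial elementary observation is $f'(u)=\alpha f(u)$, so differentiation merely reproduces the exponential up to the factor $\alpha$. This strongly suggests that the entire tower $b_{2g}(u)$ stays inside the one-dimensional space spanned by $e^{\alpha u}$, which is precisely what forces the Miura invariants to be linear in $v$.

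First I would prove by induction on $g$ that $b_{2g}(u)=\beta_{2g}e^{\alpha u}$ for suitable constants $\beta_{2g}\in\mbC$. For the base case $g=1$, formula~\eqref{eq:recursion for b} (empty sum) gives $b_2(u)=c_2 f'(u)=\alpha c_2 e^{\alpha u}$, so $\beta_2=\alpha c_2$, matching~\eqref{eq:recursion for beta}. For the inductive step, assuming $b_{2k}(u)=\beta_{2k}e^{\alpha u}$ for all $k<g$, I substitute $b_{2g-2k}'(u)=\alpha\beta_{2g-2k}e^{\alpha u}$ together with $f'(u)=\alpha e^{\alpha u}$ into~\eqref{eq:recursion for b}. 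Factoring out the common $e^{\alpha u}$ then yields exactly the recursion~\eqref{eq:recursion for beta} defining $\beta_{2g}$.

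Next, to obtain linearity of the Miura invariants, I would invoke the identity $C_{2g}(v)=b_{2g}(f^{-1}(v))$ from Theorem~\ref{theorem:new Miura invariants}. Since $f^{-1}(v)=\frac{1}{\alpha}\ln v$, we have $e^{\alpha f^{-1}(v)}=v$, and therefore $C_{2g}(v)=\beta_{2g}e^{\alpha f^{-1}(v)}=\beta_{2g}v$, as claimed. For the generating-series identity~\eqref{eq:c(z) and beta(z)}, I would start from~\eqref{genbc}. The structure just established gives $b(u,z)=\sum_{g\ge1}\beta_{2g}e^{\alpha u}z^{2g}=e^{\alpha u}\beta(z)$, so that $f(u)+b(u,z)=e^{\alpha u}(1+\beta(z))$ and hence $\partial_u(f(u)+b(u,z))=\alpha e^{\alpha u}(1+\beta(z))$, while $\partial_z(z\thin b(u,z))=e^{\alpha u}\partial_z(z\beta(z))$. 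Substituting these into~\eqref{genbc}, the factor $e^{\alpha u}$ cancels, leaving precisely~\eqref{eq:c(z) and beta(z)}.

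Given that Theorem~\ref{theorem:new Miura invariants} does the heavy lifting, I do not expect a serious obstacle here; the only step requiring genuine care is the induction, where the special role of $e^{\alpha u}$ as an eigenfunction of $\frac{d}{du}$ is exactly what keeps each $b_{2g}$ proportional to $e^{\alpha u}$. For a general $f$ the functions $b_{2g}$ would involve higher derivatives $f^{(k)}$ (as in~\eqref{eq:explicit formula for b}) and would not collapse to a function linear in $v$, so the content of the corollary is really this collapse, which is a clean consequence of the exponential choice.
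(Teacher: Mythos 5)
Your proof is correct and follows essentially the same route as the paper: the paper likewise observes that the recursion~\eqref{eq:recursion for b} with $f=e^{\alpha u}$ forces $b_{2g}(u)=\beta_{2g}e^{\alpha u}$ with $\beta_{2g}$ satisfying~\eqref{eq:recursion for beta}, concludes $C_{2g}(v)=\beta_{2g}v$, and derives~\eqref{eq:c(z) and beta(z)} from~\eqref{genbc}. Your explicit induction and the substitution $b(u,z)=e^{\alpha u}\beta(z)$ just spell out the details that the paper leaves to the reader.
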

\begin{proof}
One can immediately see that for $f=e^{\alpha u}$ the solution of the recursion~\eqref{eq:recursion for b} has the form $b_{2g}(u)=\beta_{2g}e^{\alpha u}$, where $\beta_{2g}$ is determined by the recursion~\eqref{eq:recursion for beta}. Then $C_{2g}(v)=b_{2g}(f^{-1}(v))=\beta_{2g} v$. Relation~\eqref{eq:c(z) and beta(z)} follows directly from~\eqref{eq:recursion for beta} or~\eqref{genbc}.
\end{proof}

\medskip

\begin{exa}
Under the assumptions of Corollary~\ref{corlinear}, the first few $C_{2g}(v)$ are
$$
C_2(v)=\frac{\alpha}{12} v,\quad 
C_4(v)=\Bigl(\thin\frac{\alpha^2}{240}+\alpha  c_4\Bigr) v, \quad
C_6(v)=\Bigl(\thin\frac{\alpha^3}{6720}+\frac{2 \alpha ^2 c_4}{21}+\alpha  c_6\Bigr) v.
$$ 
\end{exa}

\medskip

\begin{rmk}
We know that for arbitrary constants $c_4,c_6,\ldots$ there exists a unique formal power series $R(z)\in 1+z\mbC[[z]]$ such that the Miura invariants of the DR hierarchy corresponding to the F-CohFT $\{R(z).c_{g,n+1}^\triv\}$ are equal to $c_2=\frac{1}{12},c_4,c_6,\ldots$. Together with Corollary~\ref{corlinear} this implies that for arbitrary linear functions $C_2(v),C_4(v),\ldots$ satisfying $C_2(v)\ne 0$ there exist unique $R(z)$ and $\alpha\ne 0$ such that the Miura invariants of the hierarchy~\eqref{eq:Miura transformation of DR hierarchy} with $f(u)=e^{\alpha u}$ are equal to $C_2(v),C_4(v), \dots$. We conclude that, assuming Conjecture~\ref{conjecture:ALM} is true, the first flow of 
any integrable perturbation of the RH hierarchy in the ALM normal form~\eqref{ALMnormalform} with $C_{2g}(v)$ being linear functions and $C_2(v)\ne 0$ is Miura equivalent to a particular flow of the local DR hierarchy of a specific F-CohFT.
\end{rmk}

\medskip

\begin{exa}
For $C_{2g}(v)=\gamma v$ with $\gamma\ne 0$, corresponding to $\beta(z)=\gamma\frac{z^2}{1-z^2}$, we compute
\begin{align}
\frac{\d_z(z\beta(z))}{\alpha(1+\beta(z))}=\frac{z^2}{\alpha}\frac{3\gamma-\gamma z^2}{(1-z^2)(1-(1-\gamma)z^2)}=&\frac{z^2}{\alpha}\lb\frac{2}{1-z^2}+\frac{3\gamma-2}{1-(1-\gamma)z^2}\rb \nn\\
=&\sum_{g\ge 1}z^{2g}\lb\frac{2}{\alpha}+\frac{3\gamma-2}{\alpha}(1-\gamma)^{g-1}\rb, \nn
\end{align}
which gives $c(z)=\sum_{g\ge 1}\frac{z^{2g}}{2g+1}\lb\frac{2}{\alpha}+\frac{3\gamma-2}{\alpha}(1-\gamma)^{g-1}\rb$. Requiring $c_2=\frac{1}{12}$ gives $\alpha=12\gamma$ and finally
$$
c(z)=\sum_{g\ge 1}\frac{z^{2g}}{2g+1}\lb\frac{1}{6\gamma}+\frac{3\gamma-2}{12\gamma}(1-\gamma)^{g-1}\rb.
$$
\begin{itemize}
\item For $\gamma=\frac{2}{3}$, which gives the Miura invariants of the CH hierarchy, we obtain $\alpha=8$ and $c_{2g}=\frac{1}{4(2g+1)}$.

\smallskip

\item For $\gamma=\frac{3}{4}$, which gives the Miura invariants of the DP hierarchy, we obtain $\alpha=9$ and $c_{2g}=\frac{1}{2g+1}\lb\frac{2}{9}+\frac{1}{36\cdot 4^{g-1}}\rb$.
\end{itemize}
\end{exa}

\medskip

\end{document}